\newcolumntype{L}[1]{>{\raggedright\let\newline\\\arraybackslash\hspace{0pt}}m{#1}}
\newcolumntype{C}[1]{>{\centering\let\newline\\\arraybackslash\hspace{0pt}}m{#1}}
\newcolumntype{R}[1]{>{\raggedleft\let\newline\\\arraybackslash\hspace{0pt}}m{#1}}
\newtheorem{problem}{Problem}
\newtheorem{definition}{Definition}
\newtheorem{theorem}{Theorem}
\newtheorem{lemma}{Lemma}
  \providecommand\BibTeX{{%
    \normalfont B\kern-0.5em{\scshape i\kern-0.25em b}\kern-0.8em\TeX}}}
\newcommand{\kaixin}{\color{black}}
\newcommand{\cheng}{\color{black}}
\newcommand{\kaixinkdd}{\color{black}}
\newcommand{\yui}{\color{black}}
\newcommand{\chengB}{\color{black}}
\newcommand{\kaixinC}{\color{black}}
\newcommand{\chengc}{\color{black}}
\begin{document}
\sloppy


\author{
\IEEEauthorblockN{
Kaixin Wang\IEEEauthorrefmark{1}, 
Kaiqiang Yu\IEEEauthorrefmark{2}, 
Cheng Long\IEEEauthorrefmark{2}
}
\IEEEauthorblockA{
\IEEEauthorrefmark{1}College of Computer Science, Beijing University of Technology, Beijing, China\\
\IEEEauthorrefmark{2}College of Computing and Data Science, Nanyang Technological University, Singapore\\
kaixin.wang@bjut.edu.cn,
kaiqiang002@e.ntu.edu.sg,
c.long@ntu.edu.sg
 }
}

\title{Maximal Clique Enumeration with Hybrid Branching and Early Termination}




\if 0
\begin{CCSXML}
<ccs2012>
<concept>
<concept_id>10002950.10003624.10003633.10010917</concept_id>
<concept_desc>Mathematics of computing~Graph algorithms</concept_desc>
<concept_significance>300</concept_significance>
</concept>
</ccs2012>
\end{CCSXML}

\ccsdesc[100]{Mathematics of computing~Graph algorithms}

\keywords{Graph mining; branch-and-bound; $k$-clique listing}
\fi


\maketitle

\begin{abstract}
    Maximal clique enumeration (MCE) is crucial for tasks like community detection and biological network analysis. Existing algorithms typically adopt the branch-and-bound framework with the vertex-oriented Bron-Kerbosch (BK) branching strategy, which forms the sub-branches by expanding the partial clique with a vertex. 
    {\chengc In this paper, we present} a novel approach, \texttt{HBBMC}, a hybrid framework combining vertex-oriented BK branching and edge-oriented BK branching, where the latter adopts a branch-and-bound framework which forms the sub-branches by expanding the partial clique with a edge. 
    This hybrid strategy enables more effective pruning {\chengc and helps achieve a worst-case time complexity better than the best-known one under a condition which holds for the majority of real-world graphs.}
    To further enhance efficiency, we introduce an early termination technique, which leverages the topological information of the graphs and constructs the maximal cliques directly without branching. 
    {\chengc Our early termination technique is} applicable to all branch-and-bound frameworks. 
    Extensive experiments 
    demonstrate the superior performance of our techniques. 
\end{abstract}

\section{Introduction}
\label{sec:intro}

Given a graph $G$, a clique is a complete subgraph of $G$ where any two vertices are connected by an edge. A clique is said to be \emph{maximal} if it is not a subgraph of another clique. Maximal clique enumeration (MCE), which is to enumerate all maximal cliques in a graph, is a fundamental graph problem and finds diverse applications across different networks, e.g., social networks, biological networks and Web networks. Some notable applications include the detection of the communities in social networks \cite{lu2018community, wen2016maximal}, the prediction of the functions of {\cheng proteins} in biological networks (where the vertices are the {\cheng proteins} and the edges are the interactions between proteins) \cite{yu2006predicting, matsunaga2009clique}. Moreover, maximal clique enumeration has 
{\cheng been} applied to other data mining tasks including document clustering \cite{augustson1970analysis}, image analysis and recovery \cite{horaud1989stereo}, and the discovery of frequent purchased item patterns in e-commerce \cite{zaki1997new}.

Quite a few algorithms have been developed to solve {\cheng the} MCE problem \cite{tomita2006worst, eppstein2010listing, eppstein2013listing, li2019fast, naude2016refined, chang2013fast, conte2016sublinear, deng2023accelerating}, where most of these approaches adopt the \emph{branch-and-bound} (BB) framework \cite{tomita2006worst, eppstein2010listing, eppstein2013listing, li2019fast, naude2016refined, deng2023accelerating}. The framework involves a recursive procedure, which recursively partitions the search space (i.e., the set of all maximal cliques) into multiple sub-spaces via a conventional and widely-used branching strategy called Bron-Kerbosch (BK) branching \cite{bron1973algorithm}. Specifically, each branch $B$ is represented by a {\kaixin triplet} $(S, g_C, g_X)$, where the vertex set $S$ induces a partial clique that has been found so far, $g_C$ is the candidate graph where each vertex inside can be added to $S$ to form a larger clique and $g_X$ is the exclusion
graph where each vertex inside must be excluded from any maximal clique
within the branch $B$. To enumerate the maximal cliques within $B$, a
set of sub-branches is created through branching operations. Each sub-branch expands $S$ 
{\chengc with}
a vertex $v$ from $g_C$, and updates $g_C$ and $g_X$ accordingly (by removing {\yui from $g_C$ and $g_X$ those} vertices that are not connected with $v$, respectively). This recursive process continues until $g_C$ has no vertex inside, at which point $S$ is reported as a maximal clique if $g_X$ also has no vertex inside; otherwise $S$ is not a maximal clique. Once the recursive process 
{\cheng finishes}
the branching step at $v$, the vertex $v$ is excluded from $S$ and moves to $g_X$. The original MCE problem on $G$ can be solved {\chengB by starting} from the initial branch $B=(S, g_C, g_X)$ with {\kaixinC $S$ being an empty set, $g_C$ being the graph $G$ and $g_X$ being an empty graph}. Since each branching step is conducted upon a vertex, we call it the \emph{vertex-oriented BK branching}. Correspondingly, we 
{\chengc call}
the \underline{v}ertex-oriented BK branching based \underline{BB} framework for \underline{m}aximal \underline{c}lique enumeration as \texttt{VBBMC}.

Existing studies target two directions to enhance the performance. One {\cheng direction} is to prune more sub-branches effectively \cite{tomita2006worst, naude2016refined}. A common practice is the \emph{pivoting} strategy. Given a branch $B=(S,g_C,g_X)$, instead of producing sub-branches {\cheng on} all vertices in $g_C$, the strategy first chooses a pivot vertex $u\in V(g_C)\cup V(g_X)$. Then the branches produced 
{\chengc at}
$u$'s neighbors in $g_C$ can be safely pruned. The rationale is that those maximal cliques that have $u$’s neighbors inside either include $u$ ({\chengB in this case, they} can be enumerated from the branch produced at $u$) or include at least one $u$’s non-neighbors ({\chengB in this case, they} can be enumerated from the
branch produced {\cheng at} $u$’s non-neighbors). 
The other {\cheng direction} is to generate candidate graph instances with smaller {\cheng sizes} \cite{eppstein2010listing, eppstein2013listing, xu2014distributed}. When a vertex $v_i$ is added to $S$, the size of the candidate graph instance shrinks as the candidate graph instance in the sub-branch {\yui which} only considers $v_i$'s neighbors. Then it becomes possible to generate a set of sub-branches with 
{\chengc as shrunk}
candidate graph instances {\chengB as possible} by exploring different vertex orderings. For example, when the degeneracy ordering of $G$ is adopted at the universal branch, the maximum size of a sub-branch can be bounded by the degeneracy number $\delta$ of $G$, which is much smaller than the number of vertices $n$ in $G$ (i.e., $\delta\ll n$) \cite{eppstein2010listing, eppstein2013listing}. 
By 
{\chengc employing}
the above two techniques, existing studies \cite{eppstein2010listing, eppstein2013listing} achieve the state-of-the-art time complexity of $O(n\delta\cdot 3^{\delta/3})$. 
{\chengc We note that while}
some recent research has improved the performance practically \cite{deng2023accelerating, li2019fast, jin2022fast}, {\chengB they do not achieve} a better theorectical {\cheng result}.

Recently, we observe that another branching strategy called edge-oriented branching \cite{wang2024technical} has been 
{\cheng proposed for}
a related problem, namely $k$-clique
listing \cite{chiba1985arboricity}, which aims to list all $k$-cliques in a graph $G$ {\chengc (a $k$-clique is a clique with $k$ vertices)}. Similarly, {\chengc in this branching,} each branch $B$ 
is 
associated with a partial clique $S$ and a candidate graph $g$, whose definitions are the same as those {\chengc for the MCE problem}. At a branching step,
edge-oriented branching constructs a sub-branch by simultaneously including two vertices that are connected with each other (i.e., an edge) from $g$ to $S$. And the candidate graph in the sub-branch is updated to the subgraph of $g$ induced by the common neighbors of the vertices incident to the newly added edge. This strategy takes additional information of two vertices of an edge into account, which facilitates the formation of sub-branches with smaller {\cheng candidate} graphs. A nice property is that, by adopting a carefully specified ordering of edges (i.e., truss-based edge ordering \cite{wang2024technical}) at the initial branch, the largest size of a graph instance among the sub-branches can be bounded {\yui by} $\tau$, whose value is strictly smaller than the degeneracy $\delta$ of the graph $G$ (i.e., $\tau<\delta$).

{\chengB In this paper, we first}
migrate the edge-oriented branching strategy to {\cheng the} MCE problem. 
Given a branch $B=(S,g_C, g_X)$, 
{\chengc each}
sub-branch 
expands $S$ 
{\chengc with}
two vertices of an edge $e$ from $g_C$ {\chengB and} updates both $g_C$ and $g_X$ to be their subgraphs induced by the common neighbors of the two vertices incident to the edge $e$, respectively. Once the recursive process returns from the branching step at $e$, the edge $e$ is excluded from $S$ and moves to $g_X$. We call the above procedure \emph{edge-oriented BK branching}, and 
{\chengc call}
the corresponding framework for MCE problem {\chengc as} \texttt{EBBMC}. While \texttt{EBBMC} inherits the advantage that the {\cheng sizes} of the candidate graph instances have a tigher upper bound $\tau$ with {\cheng the} truss-based edge ordering, it would explore $O(m\cdot 2^{\tau})$ sub-branches in the worst case, which undermines its superiority. 
To tackle this issue, we propose a hybrid branching framework, which 
jointly 
{\chengc applies}
the vertex-oriented branching and the edge-oriented branching 
so that more branches can be pruned. We show a BB algorithm that is based on the hybrid branching framework, which we call \texttt{HBBMC}, would have a worst-case time complexity of $O(\delta m + \tau m \cdot 3^{\tau/3})$. Compared with the existing algorithms, when the condition $\delta \ge \tau + \frac{3}{\ln 3}\ln \rho$ is satisfied, where $\rho=\frac{m}{n}$ is the edge density of the graph, our algorithm is asymptotically better than the state-of-the-art algorithms. We note that the condition holds for many
real-world graphs, {\chengB as we verify empirically}.

To further enhance the efficiency of BB frameworks, we develop an early termination technique. The key idea is based on an observation that it would be efficient
to construct the maximal cliques directly {\chengB for} dense subgraphs without branching. For example, given a branch $B=(S,g_C,g_X)$ where the candidate graph $g_C$ is a clique and {\kaixin $g_X$ is an empty graph}, it is unnecessary to create sub-branches; instead, we can include all vertices in $g_C$ into $S$ directly.  
{\cheng Similarly,}
when $g_C$ is a 2-plex or 3-plex {\chengc (a graph $g$ is said to be a $t$-plex if each vertex $v\in V(g)$ has at most $t$ non-neighbors including itself)}, we can also efficiently construct the maximal cliques of $g_C$ in nearly optimal time. 
%
{\chengc Specifically,}
we construct {\chengB a} complementary graph $\overline{g}_C$ of $g_C$, where the complementary graph $\overline{g}_C$ has the same set of vertices as $g_C$, with an edge between two vertices in $\overline{g}_C$ if and only if they are disconnected from each other in $g_C$. An interesting observation is that the inverse graph of a 2-plex or a 3-plex only has isolated vertices, simple paths or simple cycles. By using the topological information of the inverse graph $\overline{g}_C$, we can output all maximal cliques within the branch in the time proportional to the number of maximal cliques inside.
We note that the early termination technique is applicable to all BB frameworks (i.e., \texttt{VBBMC}, \texttt{EBBMC} and \texttt{HBBMC}) without impacting the worst-case time complexity of the BB frameworks.


\smallskip\noindent
\textbf{Contributions.} We summarize our contributions as follows.

\begin{itemize}[leftmargin=*]
    \item We propose a new algorithm \texttt{HBBMC} with a hybrid framework for the MCE problem, which integrates the vertex-oriented and the edge-oriented BK branching strategies. \texttt{HBBMC} has the worst-case time complexity of $O(\delta m + \tau m \cdot 3^{\tau/3})$. When $\delta \ge \tau + \frac{3}{\ln 3}\ln \rho$, which holds for many real-world graphs, \texttt{HBBMC} is asymptotically better than the state-of-the-art algorithms. (Section~\ref{sec:BBMC}) 
    
    \item {We further develop an early termination technique to enhance the efficiency of branch-and-bound frameworks including \texttt{HBBMC}, i.e., for branches with dense candidate graphs (e.g., $t$-plexes) and empty exclusion graphs, we develop algorithms to directly construct the maximal cliques based on the topological information of the inverse graphs, which would be faster. (Section~\ref{sec:early-termination})}
    
    \item We conduct extensive experiments on both real and synthetic graphs, and the results show that our \texttt{HBBMC} based algorithm with the early termination technique consistently outperforms the state-of-the-art (\texttt{VBBMC} based) algorithms. (Section~\ref{sec:exp}) 
\end{itemize}

{\chengc In addition,} Section~\ref{sec:problem} reviews the problem and presents some preliminaries. 
Section~\ref{sec:related} reviews the related work and Section~\ref{sec:conclusion} concludes the paper.

\section{Problem and Preliminaries}
\label{sec:problem}

We consider an \emph{unweighted} and \emph{undirected} simple graph $G=(V,E)$, where $V$ is the set of vertices and $E$ is the set of edges. We denote by $n=|V|$ and $m=|E|$ the cardinalities of $V$ and $E$, respectively. Given $u,v\in V$, both $(u,v)$ and $(v,u)$ denote the undirected edge between $u$ and $v$. Given $u\in V$, we denote by $N(u,G)$ the set of neighbors of $u$ in $G$, i.e., $N(u,G)=\{v\in V\mid (u,v)\in E\}$ and define $d(u, G)=|N(u,G)|$.
Given $V_{sub}\subseteq V$, we use $N(V_{sub},G)$ to denote the common neighbors of vertices in $V_{sub}$, i.e., $N(V_{sub},G)=\{v\in V\mid \forall u\in V_{sub},\ (v,u)\in E\}$.
%
Given $V_{sub}\subseteq V$, we denote by $G[V_{sub}]$ the subgraph of $G$ induced by $V_{sub}$, i.e., $G[V_{sub}]$ includes the set of vertices $V_{sub}$ and the set of edges $\{(u,v)\in E\mid u, v\in V_{sub}\}$.
Given $E_{sub}\subseteq E$, we denote by $G[E_{sub}]$ the subgraph of $G$ induced by $E_{sub}$, i.e., $G[E_{sub}]$ includes the set of edges $E_{sub}$ and the set of vertices $\{ v\in V \mid (v, \cdot) \in E_{sub} \}$.
Let $g$ be a subgraph of $G$ induced by either a vertex subset of $V$ or an edge subset of $E$. We denote by $V(g)$ and $E(g)$ its set of vertices and its set of edges, respectively.
%

\begin{definition}[Clique~\cite{erdos1935combinatorial}]
A subgraph $g$ is said to be a clique if there exists an edge between every pair of vertices, i.e., $E(g)=\{(u,v)\mid u,v\in V(g), {\cheng u\neq v} \}$.
\end{definition}

%

Given a graph $G$, a clique $C$ is said to be maximal if there is no other clique $C'$ containing $C$, i.e., $V(C)\subseteq V(C')$. 

\begin{problem}
[Maximal Clique Enumeration~\cite{bron1973algorithm}]
Given a graph $G=(V,E)$ , the Maximal Clique Enumeration (MCE) problem aims to find all maximal cliques in $G$.
\end{problem}

{\kaixinC The maximal clique enumeration (MCE) problem is hard to solve since the number of maximal cliques in a graph could be exponential (e.g., $3^{n/3}$) in the worst case \cite{moon1965cliques}.}

\section{Branch-and-bound Algorithms}
\label{sec:BBMC}

\subsection{The Existing Branch-and-bound {\cheng Framework}: \texttt{VBBMC}}
\label{subsec:VBBMC}

We first review a suite of \emph{branch-and-bound} algorithms for maximal clique enumeration \cite{tomita2006worst, eppstein2010listing, jin2022fast, li2019fast, deng2023accelerating, naude2016refined, xu2014distributed}, which adopt a 
branching strategy called Bron-Kerbosch (BK) branching \cite{bron1973algorithm}. 
It recursively partitions the search space (i.e., the set of all maximal cliques in $G$) into several sub-spaces via \emph{branching} until some stopping {\cheng criteria} are satisfied \cite{li2019fast, jin2022fast, deng2023accelerating}. The rationale is that a maximal clique can be constructed by merging two smaller cliques, namely a partial clique $S$ and another clique {\cheng which} satisfies the maximality condition. Specifically, each branch $B$ can be represented as a triplet $(S, g_C, g_X)$, where 

\begin{itemize}[leftmargin=*]
    \item \textbf{set $S$} induces a partial clique found so far,
    \item \textbf{subgraph $g_C$} is induced by all candidate vertices that can form a larger clique with the vertices in $S$, and 
    \item \textbf{subgraph $g_X$} is induced by all {\kaixin exclusion}
    vertices that 
    {\cheng are excluded for forming cliques with $S$.}
\end{itemize}

\begin{algorithm}[t]
\caption{The vertex-oriented branching-based BB framework for maximal clique enumeration: \texttt{VBBMC}}
\label{alg:VBBMC}
\KwIn{A graph $G=(V,E)$.}
\KwOut{All maximal cliques within $G$.}
\SetKwFunction{List}{\textsf{VBBMC\_Rec}}
\SetKwProg{Fn}{Procedure}{}{}
\List{$\emptyset, G, G[\emptyset]$}\;
\Fn{\List{$S, g_C, g_X$}} {
    \If{$V(g_C) \cup V(g_X)=\emptyset$}{
        \textbf{Output} a maximal clique $S$;
        \Return\;
    }
    Let $g=G[V(g_C)\cup V(g_X)]$ and choose a pivot $u$ from $V(g)$\;
    \For{each $v_i\in V(g_C)$ based on a given vertex ordering}{
        \lIf{$(u,v_i)\in E(g)$}{\textbf{continue}}
        Create branch $B_i=(S_i,g_{C_i},g_{X_i})$ based on Eq.~(\ref{eq:vbb-branching})\;
        $g_{C_i}',~g_{X_i}'\gets$ Refine $g_{C_i}$ and $g_{X_i}$ by removing those vertices that are not adjacent to $S_i$\;
        \List{$S_i,g_{C_i}',g_{X_i}'$}\;
    }
}
\end{algorithm}

Given a graph $G=(V,E)$, the algorithm starts from the universal search space with $S=\emptyset$, $g_C = G$ and $g_X = G[\emptyset]$. Consider the branching step at a branch $B=(S, g_C, g_X)$. Let $\langle v_1, v_2, \cdots, v_{|V(g_C)|} \rangle$ be an arbitrary ordering of the vertices in $g_C$. The branching step would produce $|V(g_C)|$ sub-branches, where the $i$-th branch, denoted by $B_i=(S_i, g_{C_i}, g_{X_i})$, includes $v_i$ to $S$ and excludes $\{v_1, v_2, \cdots, v_{i-1}\}$. 
Formally, for $1\le i \le |V(g_C)|$, we have 
\begin{equation}
\label{eq:vbb-branching}
\begin{aligned}
    S_i&=S\cup \{v_i\}, \quad
    g_{C_i} = G[V(g_C)\setminus\{v_1, v_2 \cdots ,v_{i}\}], \\
    g_{X_i} &= G[V(g_X)\cup\{v_1, v_2\cdots, v_{i-1}\}]
\end{aligned}
\end{equation}
We note that $g_{C_i}$ and $g_{X_i}$ can be further refined by removing those vertices that are not adjacent to $S_i$ since they cannot form any cliques with $S_i$ within the search space. 
The process continues until no vertex can be further included into $S$, i.e., $V(g_C)=\emptyset$, at which point $S$ is a maximal clique if $V(g_X)$ is also empty; otherwise, $S$ is not maximal. 
We call the above branching step \emph{vertex-oriented BK branching} since each sub-branch is formed by including a vertex to the set $S$. 
We 
{\chengc call}
the {\chengB \underline{v}ertex-oriented \underline{b}ranch-and-\underline{b}ound framework for \underline{MC}E} 
{\chengc as} \texttt{VBBMC}. 
Furthermore, during the recursive process, the state-of-the-art algorithms \cite{tomita2006worst,eppstein2010listing,eppstein2013listing, xu2014distributed} usually adopt a pruning technique, called \emph{pivoting} strategy, {\cheng for pruning some branches}. Formally, given a branch $B=(S,g_C,g_X)$, instead of producing branches for all vertices in $g_C$, these algorithms first choose a vertex $v\in V(g_C)\cup V(g_X)$, which is called the \emph{pivot vertex}. They then create the branches only for $v$'s non-neighbors in $g_C$ (including $v$ itself if $v\in V(g_C)$), i.e., the branches produced on $v$'s neighbors are pruned. The rationale is that those maximal cliques that have $v$'s neighbors inside either include $v$ 
or include at least one {\cheng of} $v$'s non-neighbors, 
since otherwise, the clique is not maximal. 

We present the pseudo-code in Algorithm~\ref{alg:VBBMC}. We remark that Algorithm~\ref{alg:VBBMC} only presents the algorithmic idea while the possible pruning techniques (e.g., detailed pivoting selection \cite{tomita2006worst, naude2016refined, jin2022fast} and graph reduction \cite{deng2023accelerating}) and detailed implementations (e.g., the data structures used for representing a branch \cite{eppstein2010listing, eppstein2013listing, li2019fast}) vary in existing algorithms. 
{\kaixinkdd {\chengB Different variants of \texttt{VBBMC} have different time complexities (which we summarize in the Appendix~\ref{sec:app-vbb-time}), among which the best time complexity is
$O(n\delta \cdot 3^{\delta/3})$}, where $\delta$ is the degeneracy of the graph \cite{eppstein2010listing, eppstein2013listing}. }
Some more details of variants of \texttt{VBBMC} will be provided in the related work (Section~\ref{sec:related}).

\subsection{A New Branch-and-bound Framework: \texttt{EBBMC}}
\label{subsec:EBBMC}

{\chengB We} observe that there exists another {\cheng branching strategy} in which, given a branch $B=(S, g_C, g_X)$, the sub-branches can be created based on the \emph{edges} in $g_C$~\cite{wang2024technical}. We 
{\chengc call}
the branching step involved in this approach 
{\chengc as}
\emph{edge-oriented branching}. Specifically, consider the branching step at $B$ and let $\langle e_1, e_2, \cdots, e_{|E(g_C)|}\rangle$ be an arbitrary edge ordering {\cheng for the edges in} $g_C$. Then the branching step would create $|E(g_C)|$ sub-branches from $B$, where the $i$-th branch, denoted by $B_i=(S_i, g_{C_i}, g_{X_i})$, includes the two vertices incident to the edge $e_i$, i.e., $V(e_i)$, and excludes from $g_C$ those edges in $\{e_1, e_2, \cdots, e_{i-1}\}$.
Formally, for $1\le i\le |E(g_C)|$, we have
\begin{equation}
\label{eq:ebb-branching}
\begin{aligned}
    S_i &= S \cup V(e_i),\quad g_{C_i} = G[E(g_C)\setminus\{e_1, e_2, \cdots, e_{i}\}],\\
    g_{X_i} &= G[E(g_X)\cup \{e_1, e_2, \cdots, e_{i-1}\}],
\end{aligned}
\end{equation}
We note that $g_{C_i}$ and $g_{X_i}$ can be further refined by removing those vertices that are not adjacent to $S_i$.
%
{\chengB We note that}
given a branch $B$, the branching steps formulated by Eq.~(\ref{eq:ebb-branching}) do not always cover all maximal cliques in $B$ since there may exist some vertices whose degrees are zero in $g_C$ {\chengc (i.e., the vertices can be added to $S$ to form larger cliques, yet this cannot be achieved by expanding $S$ with any edges in $g_C$)}. 
%
{\kaixin
For example, when there are isolated vertices in $G$, they are also maximal cliques, i.e., 1-cliques. However, any branch produced by Eq.~(\ref{eq:ebb-branching}) does not cover these maximal cliques. In general, for the maximal cliques with their {\chengB sizes} being odd numbers, each edge-oriented branching step {\cheng would} involve two vertices to $S$ and finally there exists a branch $B=(S,g_C,g_X)$ in which there are isolated vertices in $g_C$.}
To enumerate all maximal cliques which include these vertices, we observe that no other vertices can be further added to $S$ after any of these zero-degree vertices is included in $S$. Thus, we can directly output these maximal cliques if they satisfy the maximality condition. 
Formally, let $Z(g_C)$ be the set of vertices in $g_C$ whose degrees are zero and let $\langle v_1, v_2, \cdots, v_{|Z(g_C)|} \rangle$ be an arbitrary vertex ordering. We create additional $|Z(g_C)|$ branches. For each entry $1\le j\le |Z(g_C)|$, the branch $B_j=(S_j, g_{C_j}, g_{X_j})$ is represented by
\begin{equation}
\label{eq:zbb-branching}
\begin{aligned}
    S_j &= S\cup \{v_j\}, \quad g_{C_j} = G[\emptyset], \\
    g_{X_j} &= \big(V_{C\cup X} \cup \{v_1, v_2, \cdots, v_{j-1}\}, E_{C\cup X}\big)
\end{aligned}
\end{equation}
We note that (1) $V_{C\cup X}$ (resp. $E_{C\cup X}$) is the vertex (resp. edge) set of the graph $g$, where $g$ is the subgraph of $G$ induced by the edge set $E(g_C)\cup E(g_X)$, (2) we can choose an arbitrary vertex ordering for the vertices in $Z(g_C)$ since the subgraph $g_{C_j}$ has no vertices or edges inside and (3) $g_{X_j}$ can be further refined by removing those vertices that are not adjacent to $S_j$. Then, if the graph $g_{X_j}$ has no vertex inside after refinement, $S_j$ induces a maximal clique; otherwise, $S_j$ is not a maximal clique. 
In summary, with Eq.~(\ref{eq:ebb-branching}) and Eq.~(\ref{eq:zbb-branching}), {\chengc $|E(g_C)| + |Z(g_C)|$ branches would be formed, and} all maximal cliques in $B$ would be enumerated exactly once.


\begin{algorithm}[t]
\caption{The edge-oriented branching-based BB framework for maximal clique enumeration: \texttt{EBBMC}}
\label{alg:EBBMC}
\KwIn{A graph $G=(V,E)$.}
\KwOut{All maximal cliques within $G$.}
\SetKwFunction{List}{\textsf{EBBMC\_Rec}}
\SetKwProg{Fn}{Procedure}{}{}
\List{$\emptyset, G, G[\emptyset]$}\;
\Fn{\List{$S, g_C, g_X$}} {
    \If{$V(g_C) \cup V(g_X)=\emptyset$}{
        \textbf{Output} a maximal clique $S$;
        \Return\;
    }

    \For{each $e_i\in E(g_C)$ based on a given edge ordering}{
        Create branch $B_i=(S_i,g_{C_i},g_{X_i})$ based on Eq.~(\ref{eq:ebb-branching})\;
        $g_{C_i}',~g_{X_i}'\gets$ Refine $g_{C_i}$ and $g_{X_i}$ by removing those vertices that are not adjacent to $S_i$\;
        \List{$S_i,g_{C_i}',g_{X_i}'$}\;
    }

    \For{each $v_j \in Z(g_C)$}{
        Create branch $B_j=(S_j, g_{C_j}, g_{X_j})$ based on Eq.~(\ref{eq:zbb-branching})\;
        $g_{X_j}'\gets$ Refine $g_{X_j}$ by removing those vertices that are not adjacent to $S_j$\;
        \If{$V(g_{X_j}') = \emptyset$}{
            \textbf{Output} a maximal clique $S_j$
        }
    }
}
\end{algorithm}

We call the above branching strategy {\chengc as} \emph{edge-oriented BK branching} {\chengc since} it {\cheng is} mainly based on the edge-oriented branching strategy to {\cheng partition} a search space. 
We 
{\chengc call}
the {\chengB \underline{e}dge-oriented \underline{b}ranch-and-\underline{b}ound framework for \underline{MC}E}
{\chengc as}
\texttt{EBBMC}. The pseudo-code is shown in Algorithm~\ref{alg:EBBMC}. 
{\kaixin
Compared with edge-oriented branching strategy for $k$-clique listing problem (i.e., \texttt{EBBkC} framework) \cite{wang2024technical}, both frameworks are associated with a partial clique $S$ and a candidate subgraph, whose definitions are the same with each other.
{\kaixinkdd 
The major difference 
is that in \texttt{EBBMC}, the maximal cliques can have a range of sizes, varying from the smallest one (e.g., a vertex, representing a maximal 1-clique) to the largest one ($\omega$-clique, where $\omega$ denotes the size of the maximum clique) while in \texttt{EBBkC}, all listed cliques are required to possess exactly $k$ vertices inside, with the value of $k$ being a pre-determined input.}
{\chengB As a result, for \texttt{EBBkC}, the branching step would still be conducted based on the edges as long as $|S|+1<k$, no matter whether there exist isolated vertices in the candidate subgraph of the branch. 
{\chengc Therefore,}
the branching procedure represented in Eq.~(\ref{eq:zbb-branching}) is not necessary for \texttt{EBBkC}.}



}

\noindent
\textbf{Ordering of the edges.} 
%
For a sub-branch $B_i$ produced from $B$, the size of the graph $g_{C_i}$ after the refinement (i.e., the number of the vertices in $g_{C_i}'$) is equal to the number of common neighbors of vertices in $V(e_i)$ in ${g}_{C_i}$ (line 7 of the Algorithm~\ref{alg:EBBMC}). That is
\begin{equation}
|V(g_{C_i}')| = |N(V(e_i), {g}_{C_i})|    
\end{equation}
where ${g}_{C_i}$ is defined in Eq.~(\ref{eq:ebb-branching}). 
To minimize the size of the graph instances for the sub-branches, as it would reduce the time costs for solving the current branch, we choose the \emph{truss-based edge ordering} \cite{wang2024technical} for branching. Specifically, the truss-based edge ordering can be obtained via a greedy procedure: it iteratively removes from the remaining $g_C$ the edge whose vertices have the smallest number of common neighbors and then adds it to the end of the ordering. By adopting the truss-based edge ordering, the largest size of a graph instance among all sub-branches created during the recursive procedure can be bounded by $\tau$, whose value is strictly smaller than the degeneracy number $\delta$ of the graph $G$ (i.e., $\tau<\delta$) \cite{wang2024technical}. 
The pseudo-code of \texttt{EBBMC} with truss-based edge ordering is presented in Algorithm~\ref{alg:EBBMC-T}. In particular, we only compute the truss-based edge ordering of $G$ (i.e., $\pi_{\tau}(G)$) for the branching at the initial branch (lines 2-5). Then, for any other branching step at a following branch $B=(S,g_C,g_X)$, edges in $g_C$ adopt the same ordering to those used in $\pi_{\tau}(G)$, which could differ with the truss-based edge ordering of $g_C$. 
Given a branch $B$ other than the initial branch, to create the sub-branches efficiently, we maintain four additional auxiliary sets for each edge in $G$, i.e., $V_+(\cdot)$, $E_+(\cdot)$, $V_-(\cdot)$ and $E_-(\cdot)$ (line 4). The idea is that for an edge $e$, all edges in $E_+(e)$ (resp. $E_-(e)$) are ordered behind (resp. before) $e$ in $\pi_{\tau}$ and the vertices incident to these edges are both connected with those incident to $e$, and $V_+(e)$ (resp. $V_-(e)$) is the set of the vertices in $G[E_+(e)]$ (resp. $G[E_-(e)]$). Therefore, the branching steps can be efficiently conducted via
set intersections (line 9). The correctness of this implementation can be easily verified.

\begin{algorithm}[t]
\caption{\texttt{EBBMC} with truss-based edge ordering}
\label{alg:EBBMC-T}
\KwIn{A graph $G=(V,E)$.}
\KwOut{All maximal cliques within $G$.}
\SetKwFunction{List}{\textsf{EBBMC-T\_Rec}}
\SetKwProg{Fn}{Procedure}{}{}
$\pi_{\tau}(G)\gets$ the truss-based ordering of edges in $G$\;
\For{each edge $e_i\in E(G)$ following $\pi_{\tau}(G)$}{
    Obtain $S_i$, $g_{C_i}'$ and $g_{X_i}'$ (lines 6-7 of Algorithm~\ref{alg:EBBMC})\;
    Initialize $V_+(e_i)\gets V(g_{C_i}')$, $E_+(e_i)\gets E(g_{C_i}')$, $V_-(e_i)\gets V(g_{X_i}')$ and $E_-(e_i)\gets E(g_{X_i}')$\;
    \List{$S_i,g_{C_i}',g_{X_i}'$}\;
}
Conduct Termination (lines 9-12 of Algorithm~\ref{alg:EBBMC})\;
\Fn{\List{$S, g_C, g_X$}} {
    Conduct Termination (lines 3-4 and 9-12 of Algorithm~\ref{alg:EBBMC});
    
    \For{each edge $e\in E(g_C)$}{
        $S'\gets S\cup V(e)$,~$g_C'\gets \big(V(g_C)\cap V_+(e), E(g_C)\cap E_+(e)\big)$ and $g_X'\gets \big(V(g_X)\cap V_-(e), E(g_X)\cap E_-(e)\big)$\;
        \List{$S',g_{C}',g_{X}'$}\;
        $E(g_C)\gets E(g_C)\setminus\{e\}$ and $E(g_X)\gets E(g_X)\cup \{e\}$\;
    }
}
\end{algorithm}




\begin{theorem}
\label{theo:ebbmc}
    Given a graph $G=(V,E)$, the time and space complexities of \texttt{EBBMC} with {\cheng the} truss-based edge ordering are $O(\delta m +\tau m \cdot 2^{\tau})$ and $O(m+n)$, respectively.  
\end{theorem}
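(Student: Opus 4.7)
The plan is to bound the running time by separately analyzing the initialization stage (lines~1--5) and the recursion spawned at each initial edge, and then to verify the space bound.

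For initialization, I would appeal to the greedy peeling analysis of \cite{wang2024technical} to argue that $\pi_\tau(G)$ is computable in $O(\delta m)$ time. For each edge $e_i$, building the initial sub-branch $(S_i, g_{C_i}', g_{X_i}')$ together with the auxiliary sets $V_+(e_i), E_+(e_i), V_-(e_i), E_-(e_i)$ reduces to intersecting the sorted neighbor lists of the two endpoints of $e_i$, costing $O(\delta)$ per edge and $O(\delta m)$ overall. This matches the first term of the claimed complexity.

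For the recursion, I would first invoke the truss-ordering property from \cite{wang2024technical}, which guarantees that $|V(g_C)| \le \tau$ for every candidate graph appearing inside any recursive call rooted at an initial edge. I would then argue two things: (i) the recursion tree rooted at each initial edge $e_i$ contains $O(2^\tau)$ nodes, by identifying each node with the vertex set $S \setminus V(e_i)$, which is a clique inside $g_{C_i}'$, and observing that a graph on at most $\tau$ vertices has at most $2^\tau$ cliques; and (ii) the work at each recursion node amortizes to $O(\tau)$, by representing $g_C$ and $g_X$ through their vertex sets together with references into the parent's edge list, so that each of the intersections on line~9 reduces to a vertex-set intersection of cost $O(\tau)$ plus a constant-time adjacency lookup per candidate edge. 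Multiplying (i) and (ii) yields $O(\tau \cdot 2^\tau)$ per initial edge and $O(\tau m \cdot 2^\tau)$ summed over all $m$ initial edges; combined with initialization, this gives the claimed time bound. For the space, the auxiliary sets need only be kept for the currently active initial edge, and the recursion stack has depth $O(\tau)$ with $O(\tau)$-sized frames, so the input graph of size $O(m+n)$ dominates.

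The main obstacle will be tightening step (ii): a naive reading of line~9 gives $O(\tau^2)$ per child because of the edge-set intersections on $E(g_C) \cap E_+(e)$ and $E(g_X) \cap E_-(e)$, so one must argue that these can be enumerated by intersecting vertex sets and then filtering via constant-time adjacency queries into the parent. A secondary subtlety is verifying in (i) that distinct recursion nodes correspond to distinct vertex sets $S$, which follows from the combination of the fixed edge ordering $\pi_\tau$ and the move-to-$g_X$ rule on line~11, since together they force every root-to-node path to pick a strictly increasing sequence of edges and therefore to produce a unique partial clique within $g_{C_i}'$.
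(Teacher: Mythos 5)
Your proposal takes essentially the same route as the paper's (very brief) proof: $O(\delta m)$ for the truss ordering and initial sub-branch construction, $O(m)$ subtrees each with $O(2^\tau)$ nodes (the paper phrases this as a ``recursive binary partitioning process,'' you phrase it as an injection of recursion nodes into subsets of the $\le\tau$-vertex candidate graph, which hinges on exactly the ordering-plus-exclusion argument you flag as the secondary subtlety), and $O(\tau)$ work per node. Your write-up is in fact considerably more detailed than the paper's own proof, which does not address the per-node cost of the set intersections or the space bound at all.
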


\begin{proof}
    \texttt{EBBMC} first needs $O(\delta m)$ time to obtain the truss-based edge ordering \cite{che2020accelerating}. Then it creates $O(m)$ sub-branches from the initial branch $B=(\emptyset, G, G[\emptyset])$. Since (1) the edge-oriented branching corresponds to a recursive binary partitioning process, which implies that each sub-branch would explore $O(2^{\tau})$ branches in the worst case and (2) the size of a maximal clique can be bounded by $\tau$, the worst-case time complexity of \texttt{EBBMC} is $O(\delta m + \tau m \cdot 2^{\tau})$. 
\end{proof}


\subsection{A Hybrid Branch-and-bound {\cheng Framework}: \texttt{HBBMC}}
\label{subsec:HBBMC}

{\chengc On the one hand,} {\cheng the edge-oriented branching framework with the} truss-based edge ordering can produce much smaller graph {\cheng instances} than those produced by vertex-oriented branching strategy {\cheng since} $\tau<\delta$ \cite{wang2024technical}.
{\chengc On the other hand,}
{\cheng 
the vertex-oriented branching framework can be improved with the pivoting strategy~\cite{tomita2006worst,naude2016refined}.} 
{\chengc Therefore,} 
we propose to integrate the pivoting strategy into the \texttt{EBBMC} framework {\chengc to enjoy the merits of both frameworks}. 
{\chengB Specifically,} we observe that the maximal cliques in $B$ can be enumerated either with \texttt{VBBMC} frameworks (where the branches are produced based on the vertices in $g_C$ according to Eq.~(\ref{eq:vbb-branching})) or with \texttt{EBBMC} framework (where the branches are produced mainly based on the edges in $g_C$ according to Eq.~(\ref{eq:ebb-branching}) and Eq.~(\ref{eq:zbb-branching})). To inherit the benefits from these two frameworks, we first adopt \texttt{EBBMC} {\cheng for the branching at} the initial branch $B=(\emptyset, G, G[\emptyset])$ since the produced candidate graph instance in each sub-branch is bounded by $\tau$; then for each produced sub-branch, {\kaixinC to achieve the best theoretical performance,} we adopt \texttt{VBBMC} with classic pivot selection  \cite{tomita2006worst, eppstein2010listing} to boost the efficiency. 
{\kaixinC Specifically, given a branch $B=(S,g_C,g_X)$, the pivoting strategy always selects a vertex $v^*\in V(g_C)\cup V(g_X)$ such that $v^*$ has the most number of neighbors in $g_C$.}
We call {\chengB this \underline{h}ybrid \underline{b}ranch-and-\underline{b}ound framework for \underline{MC}E} {\chengc as} \texttt{HBBMC}, whose pseudo-code is presented in Algorithm~\ref{alg:HBBMC}. 
{\kaixinC We note that \texttt{HBBMC} is the first algorithm that integrates the edge-oriented branching strategy and the vertex-oriented branching strategy 
for enumeration problem. Besides, with this new hybrid branching strategy, the worst-case time complexity can be theoretically improved when the graph satisfies some condition, which we show in the following theorem as follows.}

\begin{algorithm}[t]
\caption{The hybrid branching-based BB framework for maximal clique enumeration: \texttt{HBBMC}}
\label{alg:HBBMC}
\KwIn{A graph $G=(V,E)$.}
\KwOut{All maximal cliques within $G$.}
\SetKwFunction{List}{\textsf{VBBMC\_Rec}}
\SetKwProg{Fn}{Procedure}{}{}
$\pi_{\tau}(G)\gets$ the truss-based ordering of edges in $G$\;
\For{each edge $e_i\in E(G)$ following $\pi_{\tau}(G)$}{
    Obtain $S_i$, $g_{C_i}'$ and $g_{X_i}'$ (lines 6-7 of Algorithm~\ref{alg:EBBMC})\;
    \List{$S_i,g_{C_i}',g_{X_i}'$}\;
}
Conduct Termination (lines 9-12 of Algorithm~\ref{alg:EBBMC})\;
    
\end{algorithm}

\begin{theorem}
\label{theo:hbbmc}
    Given a graph $G=(V,E)$, the time and space complexities of \texttt{HBBMC} are $O(\delta m + \tau m \cdot 3^{\tau/3})$ and $O(m+n)$, respectively.
\end{theorem}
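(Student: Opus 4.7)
The plan is to decompose the total cost of \texttt{HBBMC} into (i) the cost of computing the truss-based edge ordering once at the root, and (ii) the cost of running \texttt{VBBMC} with pivoting on each of the $O(m)$ edge-rooted sub-branches produced in lines~2--4 of Algorithm~\ref{alg:HBBMC}. For part (i), I would reuse the existing result that the truss decomposition (and hence the truss-based edge ordering) can be computed in $O(\delta m)$ time, which already appears in the proof of Theorem~\ref{theo:ebbmc}. For part (ii), I would first invoke the property of the truss-based edge ordering established in Wang et al.\ \cite{wang2024technical}: after refinement, every candidate subgraph $g_{C_i}'$ attached to an initial sub-branch contains at most $\tau$ vertices.

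The heart of the argument is then to bound the work inside each initial sub-branch. Since \texttt{HBBMC} invokes classical pivoting-based \texttt{VBBMC} (selecting a pivot $v^*\in V(g_C)\cup V(g_X)$ maximising $|N(v^*,g_C)|$) on a graph $g_{C_i}'$ of at most $\tau$ vertices, I would apply the Tomita--Tanaka--Takahashi analysis, which shows that the pivoting-based Bron--Kerbosch recursion on an $\eta$-vertex graph generates $O(3^{\eta/3})$ recursive calls in the worst case and spends at most $O(\eta)$ work per call (for pivot selection, branching and, when applicable, emitting a maximal clique of size at most $\eta$). With $\eta\le\tau$, each initial sub-branch contributes $O(\tau\cdot 3^{\tau/3})$ time. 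Summing over the $O(m)$ initial sub-branches and adding the preprocessing gives
\begin{equation*}
O(\delta m) + O(m)\cdot O(\tau\cdot 3^{\tau/3}) = O(\delta m + \tau m\cdot 3^{\tau/3}),
\end{equation*}
and the termination handling of zero-degree vertices (lines~9--12 of Algorithm~\ref{alg:EBBMC}) is absorbed into the same bound because, at the root level, each such maximal clique is produced in $O(1)$ amortised work after the refinement.

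For the space bound, I would argue as for \texttt{EBBMC}: the input graph, the adjacency structures, and the truss-based ordering occupy $O(m+n)$ space. Along any root-to-leaf path of the recursion tree, at most $O(\tau)$ nested sub-branches coexist, each of which stores $S$, $g_C$ and $g_X$ via pointers/marks into the shared adjacency arrays rather than physical copies (in the standard Bron--Kerbosch-style implementation). Thus the total working space stays $O(m+n)$.

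The main obstacle I anticipate is justifying the per-sub-branch bound $O(\tau\cdot 3^{\tau/3})$ cleanly. The Tomita-style bound is stated for calls starting from a universal branch of the form $(\emptyset, g, \emptyset)$, whereas here each initial sub-branch starts from a non-empty $S_i$ and a non-empty $g_{X_i}'$. I will need to observe that the presence of $g_X$ only prunes branches (it never enlarges the recursion tree) and that the recursion size inside a sub-branch depends solely on $|V(g_C)\cup V(g_X)|\le \tau$ (since the refinement restricts $g_X$ to neighbours of $S_i$), so the Tomita bound still applies unchanged. Once this is settled, the arithmetic above delivers the claimed complexities.
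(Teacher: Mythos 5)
Your overall decomposition (preprocessing in $O(\delta m)$, then bounding the recursion rooted at each of the $O(m)$ edge-branches) matches the paper's, and your space argument is fine. However, there is a genuine gap in the heart of your argument, and you have in fact put your finger on it yourself without resolving it. The claim that ``the recursion size inside a sub-branch depends solely on $|V(g_C)\cup V(g_X)|\le\tau$'' is false: the truss-based ordering only guarantees $|V(g_{C_i}')|\le\tau$ (common neighbours of $e_i$'s endpoints reachable via \emph{later} edges), whereas $V(g_{X_i}')$ collects common neighbours reached via \emph{earlier} edges, and its size is not bounded by $\tau$ --- it can be as large as the full support of $e_i$. This matters because the per-call work is not $O(\eta)$ as you assert: pivot selection scans every vertex of $V(g_C)\cup V(g_X)$ and counts its neighbours in $g_C$, costing $O(c\cdot(c+x))$ with $c=|V(g_C)|$ and $x=|V(g_X)|$. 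If you plug this into a Tomita-style count of $O(3^{\tau/3})$ calls per sub-branch, the naive bound per sub-branch becomes $O(\tau(\tau+x_e)\cdot 3^{\tau/3})$, and summing over edges loses a factor of $\tau$ relative to the claimed bound (or, with your unjustified $x_e\le\tau$ assumption, silently hides the problem).

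The paper closes this gap with two devices you are missing (Appendix~\ref{app:proof-theorem2}). First, it writes the recurrence $T(c,x)\le\max_k\{k\,T(c-k,x)\}+O(c(c+x))$ and \emph{normalises} by $(c+x)$: the function $f(c)=T(c,x)/(c+x)$ then satisfies a recurrence of exactly the form of Lemma~\ref{lemma:1} (with additive term $O(c^2)$), giving $T(c,x)=O((c+x)\cdot 3^{c/3})$ rather than $O(\tau\cdot 3^{\tau/3})$ per sub-branch. Second, it bounds the sum $\sum_{e}(c_e+x_e)$ \emph{in aggregate}: each term counts the triangles containing $e$, and Lemma~\ref{lemma:2} shows the total number of triangles is at most $m\tau$, so $\sum_e(c_e+x_e)\le 3m\tau$ and the total is $O(3^{\tau/3}\cdot m\tau)$. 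Your per-sub-branch uniform bound cannot substitute for this amortised, triangle-counting accounting; to repair your proof you would need both the $(c+x)$-normalisation of the recurrence and the global triangle bound.
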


\begin{proof}
    The running time of \texttt{HBBMC} consists of (1) the time of generating the truss-based edge ordering and obtaining the subgraphs (i.e., $g_C$ and $g_X$) for enumeration, which can be done in $O(\delta m)$ time, where $\delta$ is the degeneracy of the graph, and (2) the time of recursive enumeration procedure, which can be done in $O(\tau m \cdot 3^{\tau/3})$ time. Detailed analysis can be found in the Appendix~\ref{app:proof-theorem2}. 
\end{proof}

\smallskip\noindent\textbf{{\chengc Differences between \texttt{HBBMC} and \texttt{EBBkC}~\cite{wang2024technical}.}}
{\kaixinC 
In specific, our \texttt{HBBMC} algorithm targets the maximal clique enumeration problem and has the worst-case time complexity of $O(\delta m + \tau m \cdot 3^{\tau/3})$ while \texttt{EBBkC} in \cite{wang2024technical} targets the $k$-clique listing problem and has the worst-case time complexity of $O(\delta m + km \cdot (\tau/2)^{k-2})$. Although these two theoretical results both come from the analysis of the recurrence of the time cost $T(n)$, where $n$ is the size of candidate graph, they follow totally different guidelines and have different forms of the recurrence inequalities (Eq.~(\ref{eq:rec}) of this paper v.s. Eq. (6) in \cite{wang2024technical}). To see this, we explain it from the perspective of the recursion tree (i.e., both algorithms are branch-and-bound algorithms and form a recursion tree with each branch being a tree node).
\begin{itemize}[leftmargin=*]
    \item The depth of the recursion tree formed by \texttt{HBBMC} is bounded by $\tau$ while that formed by \texttt{EBBkC} is bounded by $k$ (note that the partial solution of \texttt{HBBMC} (resp. \texttt{EBBkC}) contains at most $\tau$ (resp. $k$) vertices and we add at least one vertex to the partial set at each branching), where $\tau$ can be much larger than $k$ in practice.
    
    \item The width of the recursion tree formed by \texttt{HBBMC} is different from that of \texttt{EBBkC}. Note that for a branch $B=(S,g_C,g_X)$ in \texttt{HBBMC}, it forms $V(g_C)\setminus N(v^*, g_C)$ sub-branches based on the vertex-oriented branching with pivot selection, where $v^*$ is the pivot vertex, while for a branch $B$ in \texttt{EBBkC}, where $B$ is associated with a partial clique $S$ and a candidate graph $g$, it forms at most $V(g)$ sub-branches based on the vertex-oriented branching. 
\end{itemize}

Due to the above differences, the recursion tree of \texttt{HBBMC} would be narrower but deeper while the recursion tree of \texttt{EBBkC} in \cite{wang2024technical} would be wider but shallower. This also contributes to the different number of branches in the recursion tree. {\chengc Specifically}, the number of branches in the tree of \texttt{HBBMC} could be exponential wrt the size of the input (e.g., $O(3^{n/3})$) while that of \texttt{EBBkC} is polynomial (e.g., $O(n^k)$) wrt the size of the input. Therefore, the problems of solving the recurrence inequalities {\chengc are} largely different from each other.

Furthermore, the correctness of \texttt{HBBMC} is based on the edge-oriented BK branching, which is a newly introduced branching strategy that is not considered in any existing work, including \cite{wang2024technical}. 
{\chengc Specifically,}
we refer to two formulated branching steps of ours and \cite{wang2024technical} (Eqs.~(\ref{eq:ebb-branching}) and (\ref{eq:zbb-branching}) of this paper v.s. Eq.~(2) of \cite{wang2024technical}), respectively. The differences are in twofold. First, the data structure $g_X$ must be maintained for maximality checking {\chengc for the} maximal clique enumeration problem such that each maximal clique is enumerated only once but $g_X$ is not necessary for the $k$-clique listing problem. Second, Eq. (3) is solely for our problem, for which we have explained in Section~\ref{subsec:EBBMC}. 

Lastly, the term ``hybrid'' refers to different meanings in our paper and in \cite{wang2024technical}, respectively. In our paper, the ``hybrid'' in \texttt{HBBMC} refers to different enumeration frameworks (e.g., the vertex-oriented one and the edge-oriented one) while \cite{wang2024technical} solely utilizes edge-based branching. The achieved theoretical results in this paper (Theorem~\ref{theo:hbbmc}) rely not only on the edge ordering but also on the pivot selection strategy.}

{\smallskip\noindent\textbf{Remarks}. 
First, compared with \texttt{VBBMC} algorithms, we define the edge density of the graph as $\rho=\frac{m}{n}$. When $\delta \ge \max\{3, \tau + \frac{3\ln\rho}{\ln 3}\}$, the time complexity of \texttt{HBBMC} (i.e., $O(\delta m + \tau m \cdot 3^{\tau/3})$) {is bounded by} the time complexity of the state-of-the-art \texttt{VBBMC} algorithms (i.e., $O(n\delta \cdot 3^{\delta/3})$). The reason is as follows. {First}, $O(\delta m)$ is bounded by $O(n\delta \cdot 3^{\delta/3})$ when $\delta \ge 3$ since (1) $m \le n\delta$ \cite{lick1970k} and (2) $\delta \le 3^{\delta/3}$ holds for all $\delta$ when $\delta\ge 3$. 
{Second}, $O(\tau m \cdot 3^{\tau/3})$ is bounded by $O(n\delta \cdot 3^{\delta/3})$ when $\delta \ge \tau + \frac{3\ln\rho}{\ln 3}$. To see this, 
given $\delta \ge \tau + \frac{3 \ln\rho}{\ln 3}$, 
we have $\tau m \cdot 3^{\tau/3} < \delta \cdot n\rho \cdot 3^{\delta/3 - \frac{\ln\rho}{\ln 3} } = \delta  n \cdot 3^{\delta/3}$, where the inequality derives from the facts $\tau < \delta$ \cite{wang2024technical}, $m=n\rho$ and the given condition, and the equation derives from the fact $3^{\frac{\ln \rho}{\ln 3}}=\rho$. We collect the values of $\delta$, $\tau$ and $\rho$ on 139 real-world graphs from \cite{RealGraphs} and 84 real-world graphs from \cite{RealGraphs2}, finding that the condition $\delta \ge \max\{3, \tau + \frac{3 \ln\rho}{\ln 3}\}$ is satisfied for the majority of the graphs (111 {\chengB out} of 139 and 72 {\chengB out} of 84, respectively). 
Second, there are several different \texttt{VBBMC} algorithms in existing studies, which implies that \texttt{HBBMC} can incorporate \texttt{EBBMC} with different \texttt{VBBMC} implementations. We choose the \texttt{VBBMC} with {\kaixinC the pivot selection strategy used in \cite{tomita2006worst, eppstein2010listing}} as the default setting since it achieves the best theorectical results and the relatively good performance in practice (details {\chengB will be presented} in Section~\ref{sec:exp}). 
}

\section{Early Termination Technique}
\label{sec:early-termination}

Suppose that we are at a branch $B=(S, g_C, g_X)$, where $g_C$ is dense ({\cheng e.g., it is} a clique or nearly a clique) and {\kaixin $g_X$ is an empty graph}, and we aim to enumerate all maximal cliques within $B$. While the pivot strategy in \texttt{VBBMC} and \texttt{HBBMC} has proven to be effective in practice, it still needs to conduct {\cheng many} set intersections within the vertices in $g_C$,
which would be costly. To overcome this issue, we propose an \emph{early-termination} technique, which can directly enumerate the maximal cliques in $B$ with a dense $g_C$ without further {\cheng forming sub-branches on} it. Specifically, we have the following observations. 
If $g_C$ is a clique (a.k.a 1-plex), a 2-plex or a 3-plex, we can directly enumerate all maximal cliques within $g_C$ in nearly optimal time. For the first case, {\kaixin since $g_X$ is an empty graph, we can directly include the vertices in $g_C$ to $S$ to form a maximal clique of $G$. 
For the other two cases, 
we construct the inverse graph $\overline{g}_C$ of $g_C$. 
{An observation is that $\overline{g}_C$} only has several simple paths or cycles. With these additional topological information, we can output all maximal cliques in {\chengc an amount of} time that is proportional to the number of maximal cliques in $g_C$ (details will be discussed in Section~\ref{subsec:2-plex} and Section~\ref{subsec:3-plex}).}

\subsection{Enumerating Maximal Cliques from 2-Plex in Nearly Optimal Time}
\label{subsec:2-plex}

\begin{algorithm}[t]
\caption{Enumerate maximal cliques in a 2-plex}
\label{alg:MC2Plex}
\KwIn{A branch $B=(S,g_C,g_X)$, where $g_C$ is to a 2-plex and $g_X$ is an empty graph.}
\KwOut{All maximal cliques within $B$.}
\SetKwFunction{List}{\textsf{HBBMC-T\_Rec}}
\SetKwProg{Fn}{Procedure}{}{}
Partition $V(g_C)$ into three disjoint sets $F$, $L$ and $R$, such that $L[i]$ disconnects with $R[i]$ in $g_C$ for $1\le i\le |L|$\;
\For{$num=0$ to $2^{|L|}-1$}{
    $S'\gets F$\;
    \For{$i=1$ to $|L|$}{
        \lIf{the $i$-th bit of $num$ is 0}{
            $S'\gets S'\cup \{L[i]\}$
        }
        \lElse {
            $S'\gets S'\cup \{R[i]\}$
        }
    }

        \textbf{Output} a maximal clique $S\cup S'$\;
}
\end{algorithm}

Consider a branch $B=(S, g_C, g_X)$ where $g_C$ is a 2-plex {\kaixin and $g_X$ is an empty graph}. 
We can partition all vertices in $g_C$ into three disjoint sets, namely set $F$, set $L$ and set $R$, each of which induces a clique. This process can be done in two steps. First, we partition $V(g_C)$ into two sets: one containing the vertices that are connected with all the other vertices in $g_C$ (this is $F$) and the other containing the remaining vertices (this is $L\cup R$). For the latter, since each vertex in a 2-plex disconnects with at most 2 vertices (including itself), $L\cup R$ must contain a collection of pairs of vertices $\{u,v\}$ such that $u$ disconnects with $v$ in $g_C$. Thus, we can break each pair of vertices and partition $L\cup R$ into two disjoint sets, where $L$ and $R$ include the first and the second vertices in these pairs, respectively. We note that the {\chengB partitioning} of $L\cup R$ is not unique. Consider an arbitrary one. We observe that the vertex set $F\cup L$ induces a maximal clique in $g_C$. The reason is that if there exists a larger clique $C'$ that is induced by $F\cup L\cup R'$ where $R'\neq \emptyset$, then we must have $R'\subseteq R$. We can find a subset of $L$, denoted by $L'$, such that each vertex in $L'$ disconnects with a vertex in $R'$. In this case, $C'$ is not a clique. Then consider the number of different partitions of $L\cup R$. For each pair of vertices $\{u,v\}$ such that $u$ disconnects with $v$, we can either have (1) $u\in L$ and $v\in R$ or (2) $v\in L$ and $u\in R$. Thus, there are $2^{|L|}$ different partitions of $L\cup R$. 
That is, we have $2^{|L|}$ maximal cliques in $g_C$. We summarize the above procedure in Algorithm~\ref{alg:MC2Plex}.

\begin{theorem}
\label{theo:complex-within-2plex}
    Given a branch $B=(S,g_C,g_X)$ with $g_C$ being a 2-plex and $g_X$ being an empty graph, Algorithm~\ref{alg:MC2Plex} enumerates all maximal cliques within $B$ in $O(E(g_C) + \omega(g_C) \cdot c(g_C))$ time, where $\omega(g_C)$ is the size of the largest maximal clique within $B$ and $c(g_C)$ is the number of maximal cliques in $g_C$.
\end{theorem}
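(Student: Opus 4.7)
The plan is to establish two properties of Algorithm~\ref{alg:MC2Plex}: correctness (every output is a maximal clique within $B$, and every maximal clique within $B$ is output exactly once) and the stated time bound. The structural fact I will lean on is that in a 2-plex each vertex has at most one non-neighbor besides itself, so the non-neighbor relation on $V(g_C)\setminus F$ is a perfect matching, where $F$ is the set of vertices adjacent to every other vertex of $g_C$.

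First, I would argue that Line~1 can be realized in $O(|E(g_C)|)$ time: a vertex lies in $F$ iff its $g_C$-degree equals $|V(g_C)|-1$, and each remaining vertex is then paired with its unique non-neighbor, with $L$ and $R$ obtained by arbitrarily orienting each such pair. Next, I would prove correctness in two directions. Any set $S'=F\cup\{x_1,\ldots,x_{|L|}\}$ with $x_i\in\{L[i],R[i]\}$ is a clique of $g_C$, because vertices of $F$ are universally adjacent and any two $x_i,x_j$ with $i\ne j$ must be adjacent (the only non-neighbor of $x_i$ in $V(g_C)\setminus F$ is the other element of its own matched pair). Such $S'$ is also a maximal clique of $g_C$, since adjoining the unused element of some pair $i'$ would break adjacency with $x_{i'}$ and $F$ is already fully included. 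Conversely, every maximal clique of $g_C$ must contain all of $F$ and exactly one vertex from each matched pair, hence has exactly this form. Distinct choices of $(x_1,\ldots,x_{|L|})$ yield distinct sets, so the bitmask loop outputs each of the $c(g_C)=2^{|L|}$ maximal cliques of $g_C$ exactly once; since $g_X=\emptyset$, appending $S$ turns these into exactly the maximal cliques within $B$.

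For the running time, the partition step costs $O(|E(g_C)|)$, and the enumeration loop performs $2^{|L|}=c(g_C)$ iterations, each assembling and emitting a clique of size $|F|+|L|\le\omega(g_C)$ in $O(\omega(g_C))$ time, contributing $O(\omega(g_C)\cdot c(g_C))$ in total. Summing yields the claimed $O(|E(g_C)|+\omega(g_C)\cdot c(g_C))$ bound. The main point I would be most careful about is the matching argument itself, namely that every non-adjacent pair in $V(g_C)\setminus F$ is captured by exactly one index $i$ and that no non-adjacency crosses the partition; this relies on the precise 2-plex definition (at most two non-neighbors \emph{including the vertex itself}) and is the only non-routine step of the argument.
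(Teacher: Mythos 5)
Your proposal is correct and follows essentially the same route as the paper: the paper's argument (split between the body text of Section~\ref{subsec:2-plex} and the terse proof) likewise partitions $V(g_C)$ into $F$ and the matched pairs $L\cup R$, observes that the $2^{|L|}$ pair-orientations are in bijection with the maximal cliques, and charges $O(E(g_C))$ to the partition plus $O(\omega(g_C))$ per output. Your write-up merely makes the matching structure and the two directions of the bijection more explicit than the paper does.
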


\begin{proof}
    First, Algorithm~\ref{alg:MC2Plex} takes $O(E(g_C))$ {\cheng time} for partitioning $V(g_C)$ into three sets (line 1). In each round of lines 2-7,  the algorithm guarantees exactly one maximal clique of $g_C$ to be found. 
\end{proof}


{\kaixin We note that the above procedure achieves a \emph{nearly optimal} time complexity since $O(\omega(g_C) \cdot c(g_C))$ is the optimal time  and it only takes $O(E(g_C))$ {\chengB extra} time.}

\subsection{{\cheng Enumerating} Maximal Cliques from 3-Plex in Nearly Optimal Time}
\label{subsec:3-plex}

Consider a branch $B=(S,g_C, g_X)$ where $g_C$ is 3-plex and $g_X$ is an empty graph. 
We construct the inverse graph $\overline{g}_C$ of $g_C$. We denote all isolated vertices in $\overline{g}_C$ by $F$ since they connect with all other vertices in $g_C$. Then for each vertex $v\in V(\overline{g}_C)\setminus F$, it belongs to a connected component which can be categorized into two classes: simple paths and simple cycles. The reason is that (1) the maximum degree of a vertex in $\overline{g}_C$ is no greater than two and (2) all connected components of a graph with a maximum degree less than or equal to two consist exclusively of simple paths or simple cycles \cite{west2001introduction}.
We elaborate on the details of enumerating maximal cliques within these simple paths and simple cycles as follows.

\noindent\textbf{Category 1 - Paths.}
For a path $p=\{v_1, v_2, \cdots, v_{|p|}\}$ in $\overline{g}_C$ ($|p|\ge 2$), the vertices are in the order where $(v_i, v_{i+1})\in E(\overline{g}_C)$. 
Correspondingly, it implies that $v_1$ and $v_{|p|}$ disconnect with only one vertex in $g_C$ (i.e., $v_2$ and $v_{|p|-1}$, respectively), and for any other vertices $v_i$ (if any), it disconnects with two vertices $v_{i-1}$ and $v_{i+1}$ in $g_C$ for $2\le i\le |p|-1$. Next, we present how we can enumerate all maximal cliques in $g_C$ within the vertices of a path $p$ of $\overline{g}_C$. The procedure can be summarized in three steps.

\begin{itemize}[leftmargin=*]
    \item \textbf{Step 1 - {\chengB Initialization}.}  Let $S_p$ be a subset of $V(p)$ that induces a partial clique of $g_C$. We initialize $S_p=\{v_1\}$ or $S_p=\{v_2\}$. 
    \item \textbf{Step 2 - {\chengB Expansion}.} Let $v_i$ be the last vertex of $S_p$. Then $S_p$ can be either expanded (1) by adding $v_{i+2}$ into $S_p$ (i.e., $S_p = S_p\cup \{v_{i+2}\}$) or (2) by adding $v_{i+3}$ into $S_p$ (i.e., $S_p = S_p\cup \{v_{i+3}\}$). Repeat Step 2 until no vertex can be {\cheng used to expand} $S_p$, i.e., $i+2>|p|$, then go to Step 3. 
    \item \textbf{Step 3 - {\chengB Termination}.} Report $S_p$ as a maximal clique induced within the vertices of the path $p$.
\end{itemize}

\begin{algorithm}[t]
\caption{Function \textsf{Enum\_from\_Path($p$)}}
\label{alg:enum_path}
\KwIn{A path $p=\{v_1, v_2, \cdots, v_{|p|}\}$ in $\overline{g}_C$.}
\KwOut{All maximal cliques of $g_C$ induced with the vertices in $p$.}
\SetKwFunction{List}{\textsf{Enum\_Rec}}
\SetKwProg{Fn}{Procedure}{}{}
$\mathbb{S}\gets \emptyset$\;
\List{$\mathbb{S}, \{v_1\}, p$}; \List{$\mathbb{S}, \{v_2\}, p$}\;
\Return $\mathbb{S}$\;
\Fn{\List{$\mathbb{S}, S_p, p$}}{
    $v_i\gets$ the last vertex in $S_p$\;
    \If{$i+2>|p|$}{
        $\mathbb{S}\gets \mathbb{S} \cup \{S_p\}$;
        \Return\;
    }
    \List{$\mathbb{S}, S_p\cup\{v_{i+2}\}, p$}\;
    \lIf{$i+3\le|p|$}{
        \List{$\mathbb{S}, S_p\cup\{v_{i+3}\}, p$}
    }
}
\end{algorithm}

\begin{algorithm}[t]
\caption{Function \textsf{Enum\_from\_Cycle($c$)}}
\label{alg:enum_cycle}
\KwIn{A cycle $c=\{v_1, v_2, \cdots, v_{|c|}\}$ in $\overline{g}_C$.}
\KwOut{All maximal cliques of $g_C$ induced with the vertices in $c$.}
\SetKwFunction{List}{\textsf{Enum\_Rec}}
\SetKwProg{Fn}{Procedure}{}{}
\lIf{$|c|=3$}{\Return $\{\{v_1\},\{v_2\},\{v_3\}\}$}
\lIf{$|c|=4$}{\Return $\{\{v_1,v_3\},\{v_2,v_4\}\}$}
\lIf{$|c|=5$}{\Return $\{\{v_1,v_3\},\{v_1, v_4\},\{v_2,v_4\},\{v_2,v_5\},\{v_3,v_5\}\}$}
\If{$|c|\ge 6$} {
$\mathbb{S}\gets\emptyset$\;
\List{$\mathbb{S}, \{v_1\}, \{v_1, v_2, \cdots, v_{|c|-1}\}$}\;
\List{$\mathbb{S}, \{v_2\}, \{v_2, v_3, \cdots, v_{|c|}\}$}\;
\List{$\mathbb{S}, \{v_{|c|}, v_3\}, \{v_3, v_4, \cdots, v_{|c|-2}\}$}\;
\Return $\mathbb{S}$\;
}
\end{algorithm}

\noindent\textbf{{\chengB Correctness}.}
{First}, when both $v_1$ and $v_2$ are not in $S_p$, $S_p$ cannot be maximal since we can add $v_1$ into $S_p$ to form a larger clique. {Second}, it is easy to see that when $v_{i}\in S_p$, $v_{i+1}$ cannot be in $S_p$ since they are disconnected with each other in $g_C$. Thus, one possible choice is to add $v_{i+2}$ to $S_p$. Besides, there exists another choice that we also exclude $v_{i+2}$ from $S_p$ but include $v_{i+3}$ into $S_p$. In this case, adding $v_{i+1}$ or $v_{i+2}$ to $S_p$ cannot form a larger clique since $v_i$ disconnects with $v_{i+1}$ in $g_C$ and $v_{i+2}$ disconnects with $v_{i+3}$ in $g_C$. For other expansion that does not include $v_{i+2}$ and $v_{i+3}$ {\cheng inside}, $S_p$ cannot be maximal since we can add $v_{i+2}$ into $S_p$ to form a larger clique.  
An illustration of enumerating maximal cliques in a given path $p$ is shown in Figure~\ref{fig:path-1} and Figure~\ref{fig:path-2}. The pseudo-codes of function \textsf{Enum\_from\_Path($p$)} is shown in Algorithm~\ref{alg:enum_path}. 
\noindent\textbf{Category 2 - Cycles.}
For a cycle $c = \{v_1, v_2, \cdots, v_{|c|}\}$ in $\overline{g}_C$ ($|c|\ge 3$), the vertices are in the order where $(v_i, v_{i+1})\in E(\overline{g}_C)$. 
When $|c|=3$, $|c|=4$ and $|c|=5$, the maximal cliques are $\{\{v_1\}, \{v_2\}, \{v_3\}\}$, $\{\{v_1, v_3\}, \{v_2, v_4\}\}$ and $\{\{v_1,v_3\},\{v_1, v_4\},\{v_2,v_4\},\{v_2,v_5\},\{v_3,v_5\}\}$, respectively. 
For $|c|\ge 6$, there are three possible cases, each of which can be reduced to enumerate maximal cliques given a simple path in $\overline{g}_C$ as described in Category 1. Let $S_c$ be a subset of $V(c)$ that induces a maximal clique of $g_C$.

\begin{itemize}[leftmargin=*]
    \item \textbf{Case 1 - $v_1\in S_c$.} Consider \emph{path} $p = \{v_1, v_2, \cdots, v_{|c|-1}\}$. All maximal cliques within $c$ can be enumerated by following three steps in Category 1 with initialization $S_p=\{v_1\}$. 

    \item \textbf{Case 2 - $v_2\in S_c$.} Consider \emph{path} $p=\{v_2, v_3, \cdots, v_{|c|}\}$. All maximal cliques within $c$ can be enumerated by following three steps in Category 1 with initialization $S_p=\{v_2\}$. 

    \item \textbf{Case 3 - $v_1, v_2\notin S_c$.} Consider \emph{path} $p=\{v_3, \cdots, v_{|c|-2}\}$. All maximal cliques within $c$ can be enumerated by following three steps in Category 1 with initialization $S_p=\{v_{|c|}, v_3\}$. 
\end{itemize}

\begin{figure}[t]
\centering
\subfigure[Expand $S_p$ with $v_{i+2}$.]{
\label{fig:path-1}
\includegraphics[width=0.23\textwidth]{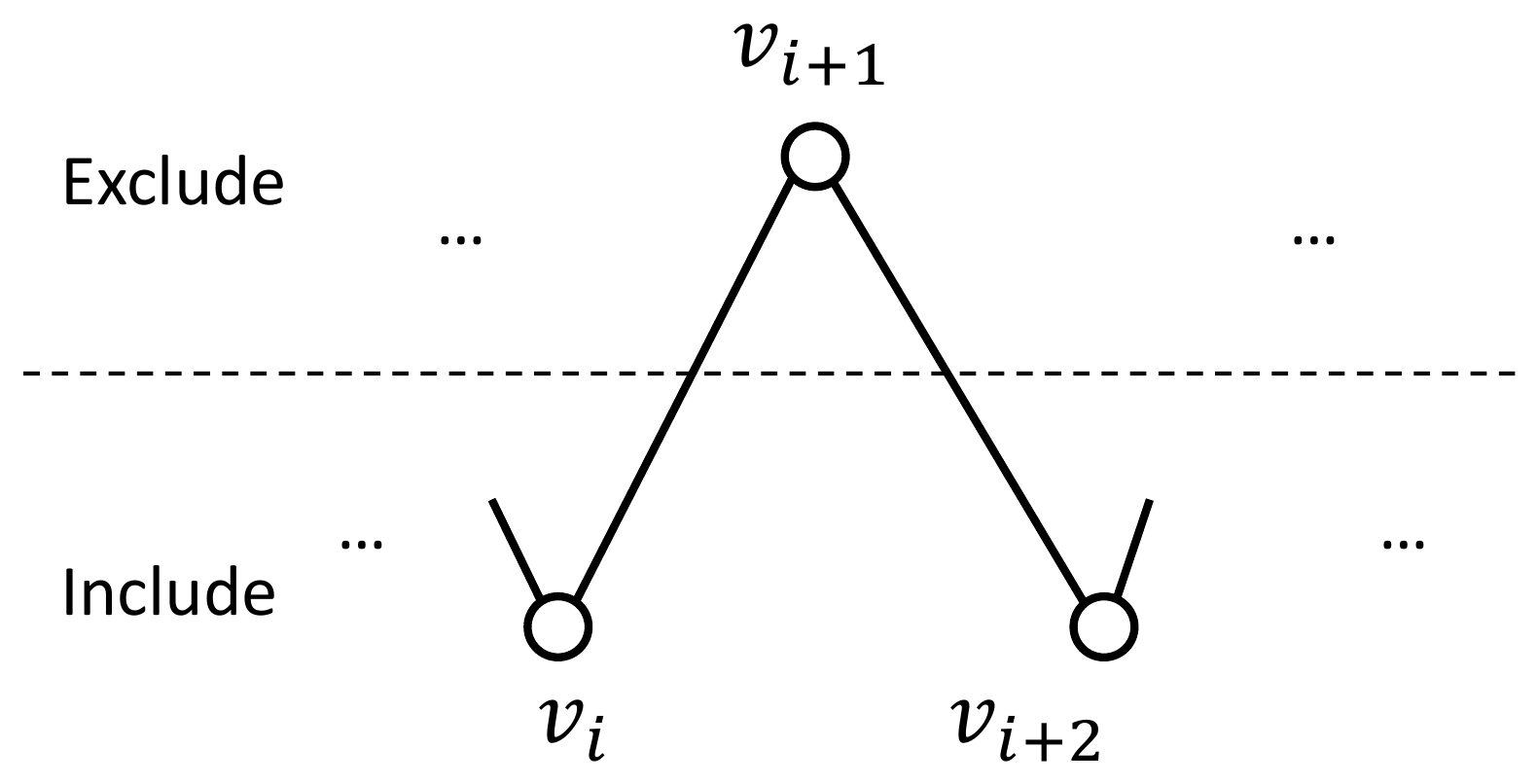}}
\subfigure[Expand $S_p$ with $v_{i+3}$.]{
\label{fig:path-2}
\includegraphics[width=0.23\textwidth]{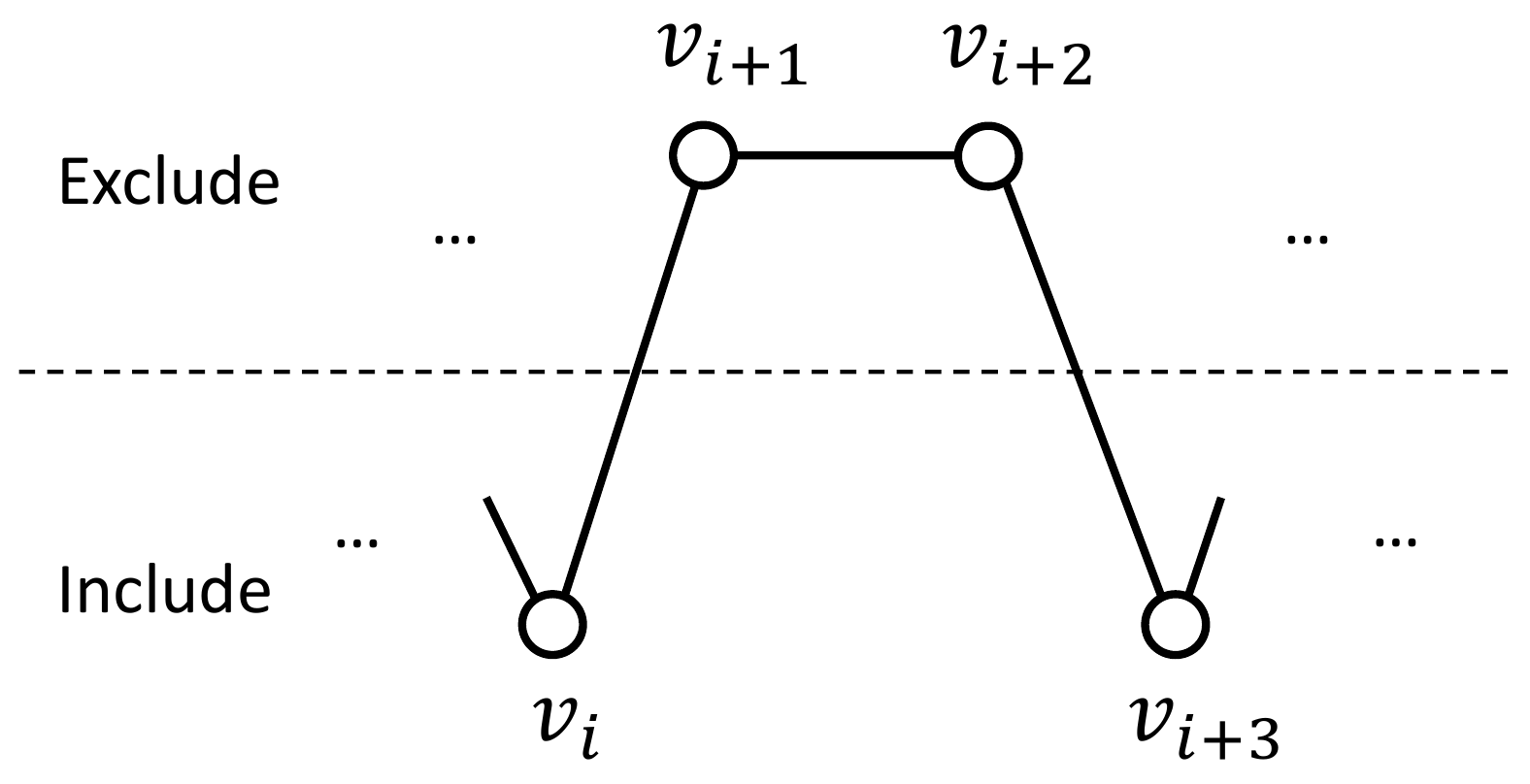}}
\subfigure[Case 1 \& 2: $v_1\in S_c$ or $v_2\in S_c$.]{
\label{fig:cycle-1}
\includegraphics[width=0.225\textwidth]{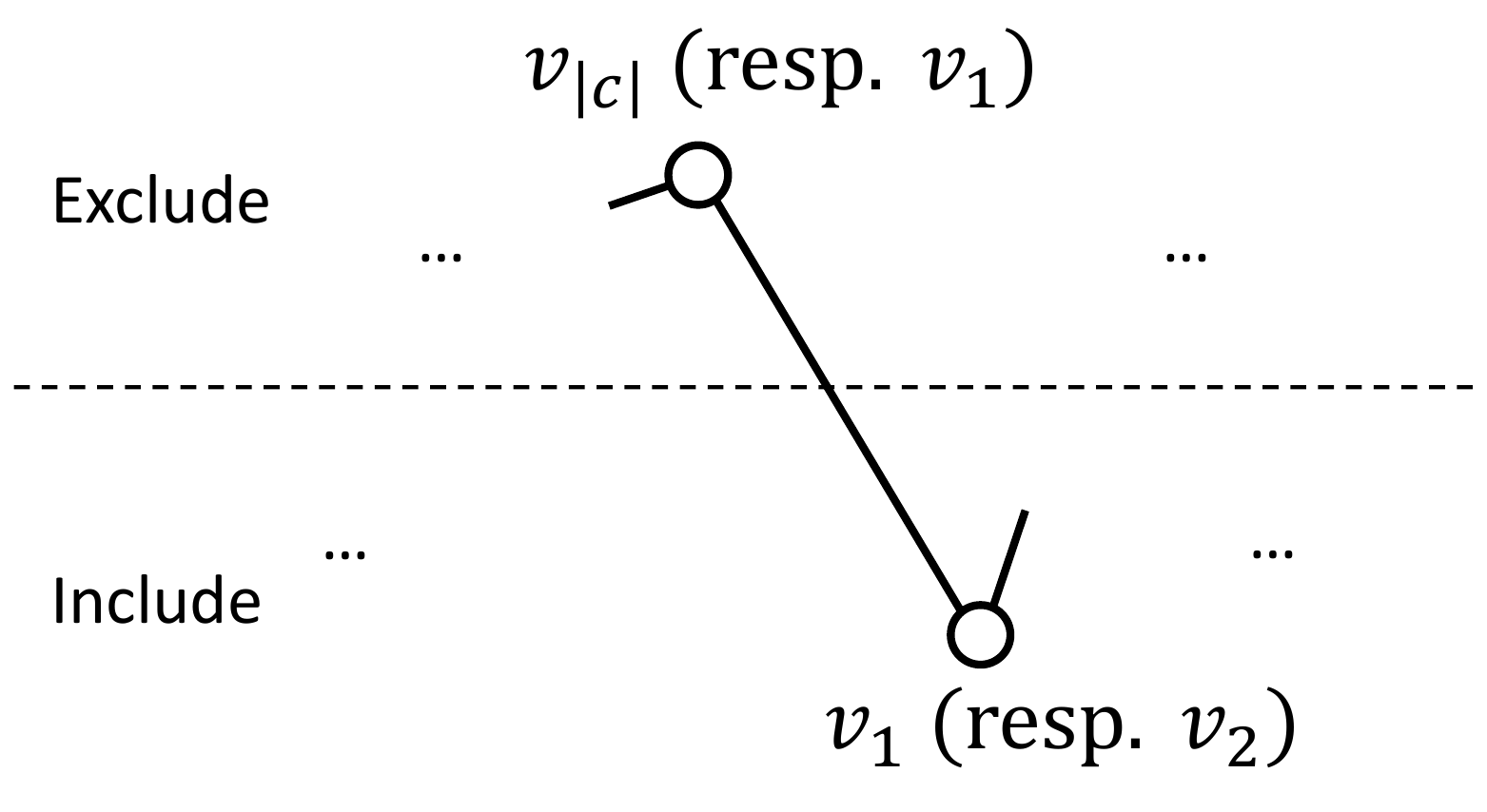}}
\subfigure[Case 3: $v_1, v_2\notin S_c$.]{
\label{fig:cycle-2}
\includegraphics[width=0.23\textwidth]{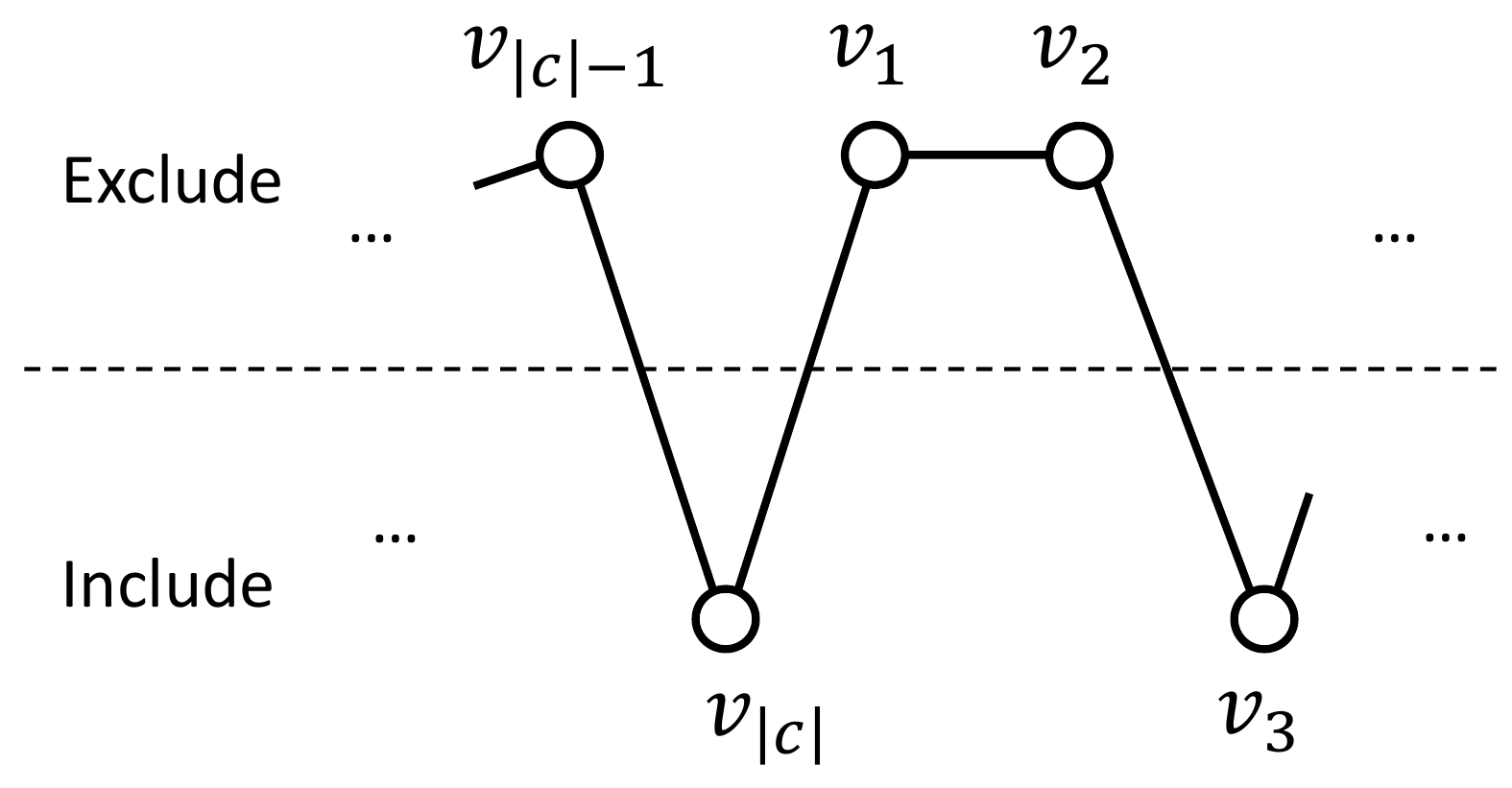}}
\caption{ (a) and (b) show two possible cases to expand a partial maximal clique $S_p$ of a given path $p$ when $v_i\in S_p$ is the last vertex. 
(c) and (d) show three possible cases to enumerate all maximal cliques given a cycle $c$. Note that ``Include'' (resp. ``Exclude'') means that the vertices that are included in (resp. excluded from) a maximal clique. }
\label{fig:cases}
\end{figure}


\begin{algorithm}[t]
\caption{{Enumerate} maximal cliques in a 3-plex}
\label{alg:MC3Plex}
\KwIn{A branch $B=(S,g_C,g_X)$ where $g_C$ is to a 3-plex and $g_X$ is an empty graph.}
\KwOut{All maximal cliques within $B$.}
\SetKwFunction{ListPath}{\textsf{Enum\_from\_Path}}
\SetKwFunction{ListCycle}{\textsf{Enum\_from\_Cycle}}
\SetKwProg{Fn}{Procedure}{}{}
Construct the inverse graph $\overline{g}_C$ of $g_C$\;
$F, \mathcal{P}, \mathcal{C}\gets$ set of the isolated vertices, paths and cycles in $\overline{g}_C$\;
\lFor{each path $p_i\in \mathcal{P}$}{
$\mathbb{S}_{p_i}\gets$\ListPath{$p_i$}
}
\lFor{each cycle $c_i\in \mathcal{C}$}{
$\mathbb{S}_{c_i}\gets$\ListCycle{$c_i$}
}

\For{each combination $(S_{p_1},S_{p_2},\cdots,S_{p_{|\mathcal{P}|}})$ where $S_{p_i}\in \mathbb{S}_{p_i}$}{
\For{each combination $(S_{c_1},S_{c_2},\cdots,S_{c_{|\mathcal{C}|}})$ where $S_{c_j}\in \mathbb{S}_{c_j}$}{
    $S'\gets F\cup \Big(\bigcup_{i=1}^{|\mathcal{P}|} S_{p_i}\Big) \cup \Big(\bigcup_{j=1}^{|\mathcal{C}|} S_{c_j}\Big)$\;
        \textbf{Output} a maximal clique $S\cup S'$\;
}
}
\end{algorithm}

\noindent\textbf{{\chengB Correctness}.}
For Case 1, when $v_1\in S_c$, $v_{|c|}$ must not be in $S_c$ since they are connected in $\overline{g}_C$. It corresponds to finding maximal cliques from the vertices in $\{v_1, v_2, \cdots, v_{|c|-1}\}$ with $v_1$ required to be inside. For Case 2, when $v_2\in S_c$, $v_1$ must not be in $S_c$. It corresponds to finding maximal cliques from the vertices in $\{v_2, \cdots, v_{|c|}\}$ with $v_2$ required to be inside. For Case 3, when $v_1,v_2\notin S_c$, to ensure the maximality of $S_c$, $v_3$ and $v_{|c|}$ must be in $S_c$ since otherwise, we can add $v_1$ or $v_2$ to $S_c$ to form a larger clique. As $v_3$ and $v_{|c|}$ are both in $S_c$, $v_{|c|-1}$ must not be in $S_c$. It corresponds to finding maximal cliques from the vertices in $\{v_3, \cdots, v_{|c|-2}\}$ with $v_3$ required to be inside. 
An illustration of enumerating maximal cliques in a given cycle $c$ is shown in Figure~\ref{fig:cycle-1} and Figure~\ref{fig:cycle-2}. The pseudo-codes of function \textsf{Enum\_from\_Cycle($c$)} is shown in Algorithm~\ref{alg:enum_cycle}.


Once the maximal cliques within each cycle and path are enumerated, we combine them together to form the maximal cliques in $g_C$. We summarize the procedure in lines {\yui 5-8} of Algorithm~\ref{alg:MC3Plex}. 

\begin{theorem}
\label{theo:complex-within-3plex}
    Given a branch $B=(S,g_C,g_X)$ with $g_C$ being a 3-plex and $g_X$ {\cheng being} an empty graph, Algorithm~\ref{alg:MC3Plex} enumerates all maximal cliques within $B$ in $O(E(g_C) + \omega(g_C) \cdot c(g_C))$ time, where $\omega(g_C)$ is the size of the largest maximal clique within $B$ and $c(g_C)$ is the number of maximal cliques in $g_C$.
\end{theorem}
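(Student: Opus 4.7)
The plan is to prove Theorem~\ref{theo:complex-within-3plex} in two parts: correctness (every maximal clique of $g_C$ is output exactly once) and running time (the claimed bound holds).

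For \textbf{correctness}, I would first observe the structural decomposition used by Algorithm~\ref{alg:MC3Plex}: every vertex in $F$ is isolated in $\overline{g}_C$, hence adjacent in $g_C$ to every other vertex in $V(g_C)$, so $F$ must appear in every maximal clique of $g_C$. Two vertices from different connected components of $\overline{g}_C$ are non-adjacent in $\overline{g}_C$ and therefore adjacent in $g_C$. Consequently, a subset $S'\subseteq V(g_C)$ induces a maximal clique of $g_C$ if and only if (i) $F\subseteq S'$, and (ii) for each path $p_i\in\mathcal{P}$ and each cycle $c_j\in\mathcal{C}$, the restriction $S'\cap V(p_i)$ (respectively $S'\cap V(c_j)$) is a maximal independent set of the induced subgraph of $\overline{g}_C$ on that component. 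The nested loop in lines 5--8 of Algorithm~\ref{alg:MC3Plex} realizes exactly this Cartesian product, so it suffices to show that \textsf{Enum\_from\_Path} and \textsf{Enum\_from\_Cycle} enumerate precisely the maximal independent sets of a path and of a cycle, respectively, each exactly once. For paths, this follows by induction on $|p|$ using the case analysis already given in the text: any maximal independent set in a path must contain $v_1$ or $v_2$, and once $v_i$ is included the next included vertex must be $v_{i+2}$ or $v_{i+3}$ (otherwise the set is non-maximal or non-independent). For cycles, the three cases based on whether $v_1$ or $v_2$ is included reduce the enumeration to a path instance, with the small cases $|c|\in\{3,4,5\}$ handled by inspection.

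For the \textbf{running time}, I would split the cost into preprocessing and output. Building $\overline{g}_C$ takes $O(|V(g_C)|^2)$ time, and decomposing it into $F$, $\mathcal{P}$, and $\mathcal{C}$ takes time linear in $\overline{g}_C$ (which has maximum degree at most $2$). The key quantitative point is that the 3-plex condition forces every vertex of $g_C$ to have at least $|V(g_C)|-3$ neighbours, so
\begin{equation}
|E(g_C)| \;\ge\; \tfrac{|V(g_C)|(|V(g_C)|-3)}{2},
\end{equation}
and thus $|V(g_C)|^2 = O(|E(g_C)|)$ whenever $|V(g_C)|>3$ (the case $|V(g_C)|\le 3$ is handled in constant time). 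Therefore, the preprocessing is absorbed into the $O(|E(g_C)|)$ term. For the enumeration phase, within each path or cycle component the recursive routines \textsf{Enum\_from\_Path} and \textsf{Enum\_from\_Cycle} spend $O(|C|)$ additional time per maximal subset $C$ they output (amortizing the recursion depth against the size of the produced set). Finally, each iteration of the nested loop in lines 5--8 emits exactly one maximal clique of $g_C$ and costs $O(\omega(g_C))$ to assemble and write, so the total output cost is $O(\omega(g_C)\cdot c(g_C))$. Summing gives $O(|E(g_C)| + \omega(g_C)\cdot c(g_C))$.

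The main obstacle will be the output analysis for the path and cycle routines: I must argue that the recursion performs no ``wasted'' work, i.e.\ that every recursive call eventually contributes to at least one output clique, so that the total cost of \textsf{Enum\_from\_Path} on a path $p$ is $O\big(\sum_{S_p}|S_p|\big)$ rather than $O\big(|p|\cdot \#\text{outputs}\big)$. This follows because the branching in Algorithm~\ref{alg:enum_path} is binary (append $v_{i+2}$ or $v_{i+3}$) and never fails to produce an output, so the recursion tree has exactly one leaf per emitted clique and depth bounded by the clique size, matching the claimed amortization. A secondary but routine point is to confirm that the combination in lines 5--8 does not incur a hidden overhead beyond $O(\omega(g_C))$ per output, which can be achieved by maintaining an incremental buffer across the nested loops.
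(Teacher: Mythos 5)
Your proposal is correct and takes essentially the same route as the paper's proof: decompose $\overline{g}_C$ into isolated vertices, simple paths, and simple cycles, charge the preprocessing to $O(|E(g_C)|)$, and charge $O(\omega(g_C))$ to each emitted maximal clique. The paper's own argument is only a two-line sketch, so your write-up additionally supplies the component-wise correctness of the path/cycle routines and the density bound $|E(g_C)|=\Omega(|V(g_C)|^2)$ needed to absorb the construction of $\overline{g}_C$, both of which the paper leaves implicit.
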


\begin{proof}
    First, Algorithm~\ref{alg:MC3Plex} takes $O(E(g_C))$ {\cheng time} for partitioning $V(g_C)$ into three sets (line 1). In each round of lines 7-10,  the algorithm guarantees exactly one maximal clique of $g_C$ to be found.
\end{proof}


{\kaixin Similarly, Algorithm~\ref{alg:MC3Plex} also achieves a \emph{nearly optimal} time complexity since $O(\omega(g_C) \cdot c(g_C))$ is the optimal time  and Algorithm~\ref{alg:MC3Plex} only takes $O(E(g_C))$ {\chengB extra} time.}

\begin{table*}[ht]
\vspace{-3mm}
    \begin{minipage}{.46\textwidth}
    \scriptsize
    \centering
    \caption{Dataset Statistics.}
    \vspace{-2mm}
    \label{tab:data}
    \begin{tabular}{r|r|rr|rrr}
    \hline
        \textbf{Graph (Name)} & {Category} & $|V|$ & $|E|$ & $\delta$ & $\tau$ & $\rho$ \\
    \hline
        \textsf{nasasrb (NA)} & {Social Network} & 54,870 & 1,311,227 & 35 & 22 &  23.9 \\ 
        \textsf{fbwosn (FB)} & {Social Network} & 63,731 & 817,090 & 52 & 35 &  12.8 \\ 
        \textsf{websk (WE)} & {Web Graph} & 121,422 & 334,419 & 81 & 80 &  2.8 \\ 
        \textsf{wikitrust (WK)} & {Web Graph} & 138,587 & 715,883  & 64 & 31 & 5.2 \\ 
        \textsf{shipsec5 (SH)} & {Social Network} & 179,104 & 2,200,076  & 29 & 22 &  12.3 \\ 
        \textsf{stanford (ST)} & {Social Network} & 281,904 & 1,992,636 & 86 & 61 & 7.1 \\ 
        \textsf{dblp (DB)} & {Collaboration} & 317,080 & 1,049,866 & 113 & 112 & 3.3 \\ 
        \textsf{dielfilter (DE)} & {Other} & 420,408 & 16,232,900 & 56 & 43 & 38.6 \\ 
        \textsf{digg (DG)} & {Social Network} & 770,799 & 5,907,132 & 236 & 72 & 7.7  \\ 
        \textsf{youtube (YO)} & {Social Network} &1,134,890 & 2,987,624 & 49 & 18 & 2.6 \\ 
        \textsf{pokec (PO)} & {Social Network} & 1,632,803 & 22,301,964 & 47 & 27 & 13.7 \\ 
        \textsf{skitter (SK)} & {Web Graph} & 1,696,415 & 11,095,298 & 111 & 67 & 6.5 \\ 
        \textsf{wikicn (CN)} & {Web Graph} & 1,930,270 & 8,956,902 & 127 & 31 & 4.6 \\ 
        \textsf{baidu (BA)} & {Web Graph} & 2,140,198 & 17,014,946 & 82 & 29 & 8.0 \\ 
        \textsf{orkut (OR)} & {Social Network} & 2,997,166 & 106,349,209 & 253 & 74 & 35.5 \\ 
        \textsf{socfba (SO)} & {Social Network} & 3,097,165 &  23,667,394 &  74 & 29 &  7.6 \\ 
    \hline
    \end{tabular}
    \end{minipage}
    \hspace{15mm}
\begin{minipage}{.46\textwidth}
\scriptsize
    \centering
    \caption{Comparison with baselines (unit: second).}
\vspace{-2mm}
    \label{tab:real}
    \begin{tabular}{c|rrrrr}
    \hline
         & \texttt{HBBMC++} & \texttt{RRef} & \texttt{RDegen} & \texttt{RRcd} & \texttt{RFac} \\ \hline
\textsf{NA} & \textbf{0.33} & 0.58 & 0.48 & \underline{0.46} & 0.61 \\
\textsf{FB} & \textbf{1.10} & 1.78 & 1.67 & \underline{1.24} & 1.70 \\
\textsf{WE} & \textbf{0.02} & 0.11 & \underline{0.08} & 0.12 & 0.17 \\
\textsf{WK} & \textbf{0.57} & 1.12 & 1.03 & \underline{1.01} & 1.68 \\
\textsf{SH} & \textbf{0.45} & 1.05 & 0.98 & \underline{0.78} & 1.15 \\
\textsf{ST} & \textbf{1.26} & 2.15 & 1.70 & \underline{1.67} & 5.07 \\
\textsf{DB} & \textbf{0.16} & 0.53 & \underline{0.47} & 0.49 & 0.83 \\
\textsf{DE} & \textbf{3.82} & 8.29 & 7.47 & \underline{5.76} & 9.91 \\ 
\textsf{DG} & \textbf{239.58} & 1441.22 & \underline{1046.40} & 1518.36 & 1603.08 \\
\textsf{YO} & \textbf{1.47} & 2.85 & 2.32 & \underline{2.19} & 6.45  \\
\textsf{PO} & \textbf{19.31} & 32.47 & \underline{25.96} & 26.38 & 31.66 \\
\textsf{SK} & \textbf{25.15} & 65.27 & 47.11 & \underline{44.90} & 71.96 \\
\textsf{CN} & \textbf{6.03} & 14.07 & \underline{11.18} & 12.65 & 20.37 \\
\textsf{BA} & \textbf{13.81} & 28.67 & 22.61 & \underline{20.59} & 36.51 \\
\textsf{OR} & \textbf{884.20} & 2297.57 & \underline{2200.54} & 2410.93 & 2749.32 \\
\textsf{SO} & \textbf{21.12} & 40.58 & 39.61 & \underline{37.44} & 42.91 \\
\hline
    \end{tabular}
\end{minipage}
\vspace{-4mm}
\end{table*}

\smallskip
\noindent\textbf{Remarks.}
\underline{First}, the early-termination technique is orthogonal to all BB frameworks, including existing \texttt{VBBMC} algorithms and the \texttt{HBBMC} algorithm. With the early-termination technique, all BB algorithms retain the same time complexities provided before but run significantly faster in practice, as verified in experiments. Additionally, the strategy involves setting a small threshold $t$ to apply early-termination to different densities of candidate graphs (i.e., $t$-plexes with varying $t$). We vary $t\in\{1,2,3\}$ {\chengB in experiments} and the results indicate that $t=3$ achieves higher speedup over other choices.
%
\underline{Second}, the overhead of checking the early-termination condition is negligible. In the \texttt{HBBMC} framework, the pivoting strategy prunes branches that do not contain any maximal cliques. For a branch $B = (S, g_C, g_X)$, we calculate $d(v, g_C)$, the number of neighbors in $g_C$ for each $v \in V(g_C) \cup V(g_X)$, and select the vertex $v^*$ with the maximum $d(\cdot, g_C)$ as the pivot. The condition for $g_C$ being a 2-plex (or 3-plex) can be checked simultaneously with pivot selection: $g_C$ is a 2-plex (or 3-plex) if the minimum $d(v, g_C)$ for any $v \in V(g_C)$ is at least $|V(g_C)| - 2$ (or $|V(g_C)| - 3$). Let $u_m$ be the vertex with the minimum $d(\cdot, g_C)$. Since $d(u, g_C) \ge d(u_m, g_C) \ge |V(g_C)| - 2$ (or $|V(g_C)| - 3$) for all $u \in V(g_C)$, $g_C$ is confirmed as a 2-plex (or 3-plex). This check only adds $O(|V(g_C)|)$ time per branch, which we consider acceptable.
{\kaixinC \underline{Third}, while the high-level idea of early-termination technique exists for enumeration problem (e.g., \cite{wang2024technical, dai2023hereditary}), our design is different from the existing ones and specific to MCE problem. 
Specifically, for $k$-clique listing problem \cite{wang2024technical}, the design follows a combinatorial manner to find all vertices that are connected with each other; while for MCE problem, our design iteratively adds all possible vertices until no vertex can be added (i.e., adding any vertex {\chengc would} induce a non-clique). For maximal biclique enumeration problem \cite{dai2023hereditary}, the termination is applied when candidate subgraphs have no edges inside; while for MCE problem, the termination is applied for dense candidate subgraphs.
\underline{Fourth}, our design is totally different from simply converting the original recursive procedure into an iterative loop. The reason is as follows. On the one hand, it is hard to convert the original recursive process into a loop since it needs to manually maintain a stack of various variables for conducting the pivot selection and pruning techniques. Thus, existing algorithms including BK and its variants are all implemented in a recursive way. On the other hand, our introduced method takes the advantages of the topological information of the $t$-plexes (with different values of $t$), which can be naturally implemented in an iterative way as shown in Algorithm~\ref{alg:MC2Plex} and Algorithm~\ref{alg:MC3Plex}. This should also be regarded as one of potential benefits of our method. We also provide examples of enumerating maximal cliques from 2-plex and 3-plex in the Appendix~\ref{app:example}.}

\section{Experiments}
\label{sec:exp}

\subsection{Experimental Setup}
\label{subsec:setup}

\noindent\textbf{Datasets.}
We use both real and synthetic datasets. 
The real datasets can be obtained from an open-source network repository \cite{rossi2015network}. For each graph, we follow existing studies \cite{deng2023accelerating, li2019fast} by ignoring directions, weights, and self-loops (if any). 
Our approach is naturally extendable to directed or weighted graphs. By first extracting all maximal cliques without considering direction or weight, we can subsequently filter the cliques to include only those that satisfy user-defined directional or weighted conditions.
We collect the graph statistics and report them in Table~\ref{tab:data}, where $\rho$ represents the edge density of a graph $G=(V,E)$ {\chengB (i.e., $\rho = |E|/|V|$)}, $\delta$ is the graph degeneracy and $\tau$ is the truss related number whose definition can be found in \cite{wang2024technical}. 
The synthetic datasets are generated based on two random graph generators, namely the Erd\"{o}s-R\'{e}yni (ER) graph model \cite{erdds1959random} and the Barab\'{a}si–Albert (BA) graph model \cite{albert2002statistical}. We provide the detailed descriptions of the synthetic datasets and the corresponding results in Appendix~\ref{app:syn}.

\begin{table*}
\vspace{-5mm}
    \begin{minipage}{.48\textwidth}
\scriptsize
    \centering
    \caption{Ablation studies and effects of different implementations of the hybrid frameworks (unit: second).}
    \vspace{-2mm}
    \label{tab:ablation}
    \begin{tabular}{c|rrr|rrr}
    \hline
      & \texttt{HBBMC++} & \texttt{HBBMC+} & \texttt{RDegen} & \texttt{Ref++} & \texttt{Rcd++} & \texttt{Fac++} \\ \hline
\textsf{NA} & \textbf{0.33} & 0.42 & 0.48 & 0.40 & \underline{0.38} & 0.42 \\
\textsf{FB} & \underline{1.10} & 1.40 & 1.67 & 1.17 & \textbf{0.99} & 1.20 \\
\textsf{WE} & \textbf{0.02} & 0.06 & 0.08 & \underline{0.04} & 0.05 & 0.06 \\
\textsf{WK} & \textbf{0.57} & 0.78 & 1.03 & 0.68 & \underline{0.63} & 0.94  \\
\textsf{SH} & \underline{0.45} & 0.88 & 0.98 & 0.48 & \textbf{0.43} & 0.53  \\
\textsf{ST} & \textbf{1.26} & \underline{1.45} & 1.70 & 1.60 & 1.49 & 3.74  \\
\textsf{DB} & \textbf{0.16} & 0.38 & 0.47 & \underline{0.18} & 0.20 & 0.29  \\
\textsf{DE} & \underline{3.82} & 5.53 & 7.47 & 4.23 & \textbf{3.53} & 5.07  \\
\textsf{DG} & \textbf{239.58} & 521.98 & 1046.40 & 426.28 & \underline{363.25} & 412.58  \\
\textsf{YO} & \textbf{1.47} & 1.92 & 2.32 & 1.80 & \underline{1.66} & 4.08 \\
\textsf{PO} & \textbf{19.31} & \underline{22.33} & 25.96 & 24.15 & 23.54 & 23.55 \\
\textsf{SK} & \textbf{25.15} & 40.81 & 47.11 &  34.85 & \underline{28.78} & 54.45 \\
\textsf{CN} & \textbf{6.03} & \underline{7.50} & 11.18 & 7.59 & 8.19 & 10.88 \\
\textsf{BA} & \textbf{13.81} & 18.73 & 22.61 & 17.51 & \underline{15.09} & 16.89          \\
\textsf{OR} & \textbf{884.20} & 1433.02 & 2200.54 & \underline{923.39} & 1162.74 & 1104.95 \\
\textsf{SO} & \textbf{21.12} & 32.16 & 39.61 & \underline{21.63} & 23.95 & 22.88 \\ \hline
    \end{tabular}
    \end{minipage}
    \hspace{8mm}
    \begin{minipage}{.48\textwidth}
\scriptsize
    \centering
    \caption{Effects of hybrid frameworks, varying the time changing from edge-oriented to vertex-oriented branching strategy.}
    \vspace{-2mm}
    \label{tab:hybrid}
\begin{tabular}{c|rr|rr|rr}
\hline
\multirow{2}{*}{} & \multicolumn{2}{c|}{$d=1$} & \multicolumn{2}{c|}{$d=2$} & \multicolumn{2}{c}{$d=3$}  \\\cline{2-7}
 & Time (s) & \#Calls  & Time (s) & \#Calls & Time (s) & \#Calls \\ \hline
\textsf{NA} & \textbf{0.33} & 365K & 0.99 & 1.57M & 4.99 & 13.3M  \\
\textsf{FB} & \textbf{1.10} & 2.15M & 1.46 & 3.47M & 2.45 & 6.82M \\
\textsf{WE} & \textbf{0.02} & 205K & 0.11 & 467K & 1.29 & 1.45M  \\
\textsf{WK} & \textbf{0.57} & 1.76M & 1.04 & 2.91M & 2.35 & 5.83M  \\
\textsf{SH} & \textbf{0.45} & 1.57M & 0.72 & 3.27M & 1.91 & 10.6M \\
\textsf{ST} & \textbf{1.26} & 1.69M & 1.83 & 3.56M & 11.12 & 14.7M \\
\textsf{DB} & \textbf{0.16} & 537K & 0.27 & 1.43M & 3.05 & 3.61M \\
\textsf{DE} & \textbf{3.82} & 1.29M & 33.28 & 17.4M & 313.02 & 279.1M  \\
\textsf{DG} & \textbf{239.58} & 1.54B & 583.76 & 1.89B & 798.05 & 2.07B  \\
\textsf{YO} & \textbf{1.47} & 3.97M & 1.58 & 6.25M & 1.75 & 8.24M  \\
\textsf{PO} & \textbf{19.31} & 27.9M & 21.48 & 39.0M & 25.48 & 65.1M \\
\textsf{SK} & \textbf{25.15} & 53.8M & 30.86 & 76.8M & 59.09 & 104.5M \\
\textsf{CN} & \textbf{6.03} & 16.6M & 13.57 & 24.9M & 16.57 & 39.8M \\
\textsf{BA} & \textbf{13.81} & 25.1M & 25.18 & 35.4M & 26.43 & 53.5M \\
\textsf{OR} & \textbf{884.20} & 5.58B & 1391.90 & 6.11B & 1829.41 & 6.70B \\
\textsf{SO} & \textbf{21.12} & 42.5M & 28.51 & 61.3M & 38.23 & 108.8M \\ \hline
\end{tabular}
\end{minipage}
\vspace{-5mm}
\end{table*}

\smallskip
\noindent\textbf{Baselines, Metrics and Settings.} 
We choose \texttt{HBBMC} as the default branching-based BB framework for comparison. Besides, we apply two techniques to enhance the performance of \texttt{HBBMC}. One is the early-termination technique (\texttt{ET} for short) as introduced in Section~\ref{sec:early-termination}. We set the the parameter $t=3$ by default {\chengB for the early-termination technique.}
The other is the graph reduction technique (\texttt{GR} for short) proposed in \cite{deng2023accelerating}, which is to eliminate the branches that are created {\cheng based on} some small-degree vertices and report the maximal cliques that {\cheng involve} these vertices directly during the recursive procedure. Note that \texttt{GR} is also orthogonal to frameworks and can be easily adopted to our algorithm. We denote our algorithm with the above two techniques by \texttt{HBBMC++}. We compare our algorithm \texttt{HBBMC++} with four state-of-the-art algorithms proposed in \cite{deng2023accelerating}, namely \texttt{RRef} , \texttt{RDegen}, \texttt{RRcd} and \texttt{RFac} in terms of running time. Specifically, each of these four algorithms can be viewed as the original \texttt{VBBMC} algorithms, i.e., \texttt{BK\_Ref} \cite{naude2016refined}, \texttt{BK\_Degen} \cite{eppstein2010listing}, \texttt{BK\_Rcd} \cite{li2019fast}, \texttt{BK\_Fac} \cite{jin2022fast}, with {\cheng the} graph reduction technique. 
The running time of all algorithms reported in our paper includes the time costs of (1) generating the orderings of vertices or edges (if any) and (2) enumerating maximal cliques via recursive procedures. The source codes of all algorithms are written in C++ and the experiments are conducted on a Linux machine with a 2.10GHz Intel CPU and 128GB memory. 
{\kaixinC The codes are available at \url{https://github.com/wangkaixin219/HBBMC}.}

\subsection{Experimental Results on Real Datasets}
\label{subsec:real-result}

\noindent
\textbf{(1) Comparison among algorithms.}
Table~\ref{tab:real} shows the results of the running time of different algorithms for maximal clique enumeration on all datasets. We observe that our algorithm \texttt{HBBMC++} runs faster than all baselines on all datasets. For most of the datasets that satisfy the condition $\delta \ge \tau+\frac{3}{\ln 3}\ln\rho$, the results are consistent with our theoretical analysis that the time complexity of \texttt{HBBMC++} is better than those of the baseline algorithms. For example, on \textsf{DG}, \textsf{CN} and \textsf{OR}, the values of $\tau$ are far smaller than the values of $\delta$. As a result, \texttt{HBBMC++} can run up to 4.38x faster than the state-of-the-art algorithms. For the others (e.g., \textsf{WE} and \textsf{DB}), the speedups are also significant, i.e., \texttt{HBBMC++} achieves 4.0x and 2.94x speedups on \textsf{WE} and \textsf{DB}, respectively. There are two possible reasons. First, while the \texttt{HBBMC} would {\chengB produce} more branches than the \texttt{VBBMC} algorithms from the initial branch (i.e., $m$ branches v.s. $n$ branches), the sizes of the branches are typically much smaller, which rarely reaches the worst case. Second, the early-termination technique \texttt{ET} also {\cheng helps} quickly enumerate those maximal cliques in a dense subgraph instead of making branches for the search space, which dramatically reduces the running time.
\texttt{RRcd} and \texttt{RDegen} also perform well in practice. Both are BK branching-based algorithms, where the former follows \texttt{BK\_Rcd} framework while the latter follows \texttt{BK\_Degen} framework. The main difference between \texttt{BK\_Rcd} and \texttt{BK\_Degen} is that \texttt{BK\_Rcd} adopts a \emph{top-to-down} approach that starts from a subgraph and iteratively removes vertices from the subgraph until the subgraph becomes a clique, while \texttt{BK\_Degen}, in essence, adopts a \emph{bottom-to-up} approach that starts from an individual vertex, adds only one vertex into the result set with one independent recursive call, and discovers a maximal clique utill all its vertices are added into the result set. When $g_C$ is already a clique or a near-clique (e.g., a clique with only a few edges missing), \texttt{BK\_Rcd} would create less branches than that of \texttt{BK\_Degen}. This would help explain why \texttt{BK\_Rcd} sometimes performs better than \texttt{BK\_Degen} in practice.


\smallskip\noindent
\textbf{(2) Ablation studies.}
We compare two variants of our methods, namely \texttt{HBBMC++} (the full version) and \texttt{HBBMC+} (the full version without the early-termination \texttt{ET}), with the state-of-the-art \texttt{VBBMC} algorithm \texttt{RDegen}. 
We note that it is not fair to compare \texttt{HBBMC} with other \texttt{VBBMC} algorithm (e.g., \texttt{RRcd}) since the former applies the classic pivot selection introduced in \cite{tomita2006worst} while the latter does not. 
We report the results in the columns 1-3 of Table~\ref{tab:ablation} and have the following observations. 
\underline{First}, \texttt{HBBMC+} runs consistently and clearly faster than \texttt{RDegen}, which demonstrates the contribution of the hybrid frameworks of \texttt{HBBMC} since \texttt{HBBMC+} and \texttt{RDegen} only differ in their frameworks (note that the graph reduction technique \texttt{GR} is employed in \texttt{RDegen}). 
\underline{Second}, \texttt{HBBMC++} runs faster than \texttt{HBBMC+}, which demonstrates the contribution of the early-termination technique \texttt{ET}. 
For example, on \textsf{OR}, the hybrid framework and early-termination technique contribute 41.7\% and 58.3\% to the efficiency improvements over \texttt{RDegen}, respectively.


\begin{table*}[t]
\vspace{-5mm}
    \centering
    \caption{Effects of early-termination technique, where ``Ratio'' is the ratio between the number of branches that can be early-terminated (i.e., $g_C$ is $t$-plex and $g_X$ is an empty graph) and the number of all branches where the condition that $g_C$ is a $t$-plex is satisfied.}
    \vspace{-2mm}
    \label{tab:early}
\begin{tabular}{c|R{1.1cm}R{1.1cm}|R{1.1cm}R{1.1cm}R{1.1cm}|R{1.1cm}R{1.1cm}R{1.1cm}|R{1.1cm}R{1.1cm}R{1.1cm}}
\hline
\multirow{2}{*}{} & \multicolumn{2}{c|}{$t=0$} & \multicolumn{3}{c|}{$t=1$} & \multicolumn{3}{c|}{$t=2$} & \multicolumn{3}{c}{$t=3$} \\\cline{2-12}
 & Time (s) & \#Calls  & Time (s) & \#Calls & Ratio & Time (s) & \#Calls & Ratio & Time (s) & \#Calls & Ratio \\ \hline
\textsf{NA} & 0.42 & 552K & 0.38 & 374K & 19.47\% & {0.34} & 366K & 19.83\% & {0.33} & 365K & 19.72\% \\
\textsf{FB} & 1.40 & 4.08M & 1.33 & 3.45M & 75.47\% & {1.25} & 2.77M & 74.90\% & {1.10} & 2.15M & 65.92\% \\
\textsf{WE} & 0.06 & 321K & 0.04 & 217K & 59.14\% & {0.03} & 206K & 59.23\% & {0.02} & 205K & 57.39\% \\
\textsf{WK} & 0.78 & 3.36M & 0.64 & 2.76M & 84.72\% & {0.60} & 2.20M & 83.06\% & {0.57} & 1.76M & 76.35\% \\
\textsf{SH} & 0.88 & 2.31M & 0.66 & 1.72M & 53.98\% & {0.52} & 1.66M & 52.91\% & {0.45} & 1.57M & 49.47\% \\
\textsf{ST} & 1.45 & 2.61M & 1.36 & 1.99M & 64.94\% & {1.29} & 1.76M & 60.98\% & {1.26} & 1.69M & 57.17\% \\
\textsf{DB} & 0.38 & 993K & 0.29 & 571K & 57.71\% & {0.22} & 550K & 55.92\% & {0.16} & 537K & 52.47\% \\
\textsf{DE} & 5.53 & 2.26M & 4.94 & 1.30M & 4.23\% & {4.02} & 1.29M & 4.63\% & {3.82} & 1.29M & 4.63\% \\
\textsf{DG} & 521.98 & 2.36B & 419.62 & 2.06B & 73.76\% & {347.80} & 1.78B & 71.38\% & {239.58} & 1.54B & 64.50\% \\
\textsf{YO} & 1.92 & 6.30M & 1.74 & 5.00M & 82.16\% & {1.57} & 4.37M & 78.91\% & {1.47} & 3.97M  & 74.79\% \\
\textsf{PO} & 22.33 & 38.6M & 21.20 & 33.4M & 63.58\% & {20.03} & 30.2M & 61.97\% & {19.31} & 27.9M & 57.25\% \\
\textsf{SK} & 40.81 & 102M & 35.83 & 82.3M & 82.65\% & {30.45} & 69.1M & 83.11\% & {25.15} & 53.8M & 77.81\% \\
\textsf{CN} & 7.50 & 25.1M & 6.86 & 20.6M & 78.74\% & {6.57} & 18.2M & 76.07\% & {6.03} & 16.6M & 71.92\% \\
\textsf{BA} & 18.73 & 36.4M & 15.49 & 31.1M & 73.33\% & {14.39} & 27.6M & 71.19\% & {13.81} & 25.1M & 66.88\% \\
\textsf{OR} & 1433.02 & 8.99B & 1034.83 & 7.73B & 69.29\% & {966.22} & 6.78B & 67.12\% & {884.20} & 5.58B & 62.90\%\\
\textsf{SO} & 32.16 & 63.9M & 26.03 & 53.2M & 71.11\% & {18.22} & 47.0M & 68.45\% & {21.12} & 42.5M & 62.69\% \\ \hline
\end{tabular}
\vspace{-4mm}
\end{table*}

\begin{table}[t]
\vspace{-2mm}
\scriptsize
    \centering
    \caption{Effects of truss-based edge ordering \\(unit: second).}
\vspace{-2mm}
    \label{tab:revise1}
    \begin{tabular}{c|rrrrr}
    \hline
    & \texttt{HBBMC++} & \texttt{VBBMC-dgn} & \texttt{HBBMC-dgn} & \texttt{HBBMC-mdg} \\ \hline
\textsf{NA} & \textbf{0.33}     & 0.44 & 0.45 & \underline{0.37}   \\
\textsf{FB} & \textbf{1.10}     & 1.42 & 1.43 & \underline{1.26}   \\
\textsf{WE} & \textbf{0.02}     & \underline{0.04} & 0.04 & 0.05   \\
\textsf{WK} & \textbf{0.57}     & 0.76 & 0.77 & \underline{0.73}   \\
\textsf{SH} & \textbf{0.45}     & 0.66 & 0.68 & \underline{0.55}   \\
\textsf{ST} & \textbf{1.26}     & 1.81 & 1.89 & \underline{1.57}   \\
\textsf{DB} & \textbf{0.16}     & 0.27 & 0.28 & \underline{0.23}   \\
\textsf{DE} & \textbf{3.82}     & 6.81 & 6.96 & \underline{5.13}   \\ 
\textsf{DG} & \textbf{239.58}   & 594.27 & 596.55 & \underline{486.02}   \\
\textsf{YO} & \textbf{1.47}     & \underline{2.42} & 2.51 & 2.53   \\
\textsf{PO} & \textbf{19.31}    & 25.99 & 26.58 & \underline{20.64}   \\
\textsf{SK} & \textbf{25.15}    & 37.58 & 38.71 & \underline{32.30}   \\
\textsf{CN} & \textbf{6.03}     & 11.91 & 12.36 & \underline{7.83}  \\
\textsf{BA} & \textbf{13.81}    & 16.78 & 17.19 & \underline{16.58}  \\
\textsf{OR} & \textbf{884.20}   & 1505.95 & 1550.6 & \underline{1204.22}   \\
\textsf{SO} & \textbf{21.12}    & 36.03 & 37.33 & \underline{27.66}  \\
\hline
    \end{tabular}
\vspace{-4mm}
\end{table}

\smallskip\noindent
\textbf{(3) Effects of hybrid frameworks (varying vertex-oriented branching framework).}
Recall that in Section~\ref{subsec:HBBMC}, there are several combinations of \texttt{EBBMC} and \texttt{VBBMC} to implement the hybrid framework (i.e., using different \texttt{VBBMC} frameworks). Therefore, we first study the effect of these combinations by making comparison among these implementations. We denote the algorithms for comparison by \texttt{HBBMC++} (the full version), \texttt{Ref++}, \texttt{Rcd++} and \texttt{Fac++}. We note that each of the last three algorithms corresponds to an enumeration procedure in which {\cheng we} (1) use the edge-oriented BB framework with truss-based edge ordering {\cheng for the partitioning at} the initial branch, (2) use the corresponding \texttt{VBBMC} algorithm (e.g., \texttt{BK\_Ref}, \texttt{BK\_Rcd} or \texttt{BK\_Fac}) to produce the remaining branches and (3) use \texttt{ET} (with $t=3$) and \texttt{GR} to boost the performance. We report the results in columns 4-6 of Table~\ref{tab:ablation}. Compared among the algorithms with different \texttt{VBBMC} frameworks,  \texttt{HBBMC++} runs the fastest on the majority of the datasets. While \texttt{Rcd++} runs faster on some datasets (e.g., \textsf{FB}, \textsf{SH} and \textsf{DE}), the differences are marginal.

\smallskip\noindent
\textbf{(4) Effects of hybrid frameworks (varying the time when \texttt{HBBMC++} changes from edge-oriented to vertex-oriented branching strategy).}
We also explore alternative hybrid strategies by investigating the use of the edge-oriented branching strategy at varying depths within the recursion tree. Specifically, we applied the edge-oriented branching strategy at the early stages of the tree, up to a depth threshold $d\ge 1$, beyond which we switched to the pivot-based vertex-oriented branching strategy. By varying $d$ within the range $d\in\{1,2,3\}$, we aimed to identify the optimal balance between the effectiveness of edge-oriented branching and the more granular control provided by vertex-oriented branching at deeper levels. We note that the case $d=1$ corresponds to the branching strategy employed in \texttt{HBBMC++}.
Table \ref{tab:hybrid} shows that increasing $d$ typically results in more branching calls and longer runtimes, due to the lack of pivot-based pruning in edge-oriented branching. However, edge-oriented branching is still effective for rapid pruning at the initial stages, and adjusting $d$ allows us to fine-tune when to leverage the strengths of each strategy. The results suggest that carefully choosing the threshold $d$ (i.e., $d=1$) is crucial for optimizing the performance of the hybrid framework, ensuring that the transition between branching strategies is both timely and effective.

\smallskip\noindent
\textbf{(5) Effects of early-termination technique.}
We study the effects by choosing different parameter $t$ (in $t$-plex) for early-termination. We vary $t$ in the range $\{0, 1, 2, 3\}$ and report the corresponding running time of \texttt{HBBMC++} under different values of $t$. We note that when $t=0$, \texttt{HBBMC++} is equivalent to \texttt{HBBMC+} (the full version without early-termination technique). The results are shown in Table~\ref{tab:early}. We observe that the running time of \texttt{HBBMC++} decreases as the value of $t$ increases. The reason is that 
we can early-terminate the recursion in an earlier phase when $t$ has a larger value as $t$-plex must be a $t'$-plex if $t<t'$ but not vice versa. We collect the statistics of the number of recursive function calls of \textsf{VBBMC\_Rec} in \texttt{HBBMC}, showing that the numbers drop steadily from $t=0$ to $t=3$, which explains the contribution of the efficiency improvement with different values of $t$. 
Besides, there is another factor that affects the power of the early-termination. That is the ratio between the number of branches that can be early-terminated (i.e., $g_C$ is $t$-plex and $g_X$ is an empty graph), which we denote by $b_0$, {\cheng and} the number of all branches where the condition that $g_C$ is a $t$-plex is satisfied, which we denote by $b$. The closer the ratio is to 1, the more potent the early-termination becomes. We also report this ratio in Table~\ref{tab:early}. We have the following observations. First, the ratio is over 60\% with different settings of $t$ on most of datasets, which indicates that the early-termination truly helps \texttt{HBBMC} accelerate the enumeration in such dense subgraphs. For the datasets such as \textsf{NA} and \textsf{DE} in which the ratio is relatively small, the possible reason is that these two graphs are dense (i.e., $\rho > 20$) but have too few maximal cliques inside (i.e., there are only 28,154 and 24,012 maximal cliques in \textsf{NA} and \textsf{DE}, respectively). As a result, the value of $b$ is much larger than $b_0$ on these two graphs. Second, it shows a trend that the ratio decreases as the value of $t$ increases. We collect the values of $b_0$ and $b$ and find that while $b_0$ and $b$ both increase as $t$ increases, the increase rate of $b$ is larger than that of $b_0$. The possible reason is that the requirement that $g_C$ is $t$-plex is much easier to be satisfied as the value of $t$ increases but $g_X$ is not always an empty graph.


\smallskip\noindent
\textbf{(6) Effects of truss-based edge ordering.}
We compare the performance of \texttt{HBBMC++} with three new baselines that only differ in their branching strategies for the initial branch. \texttt{VBBMC-dgn} uses a vertex-oriented branching strategy based on degeneracy ordering. \texttt{HBBMC-dgn} adopts an edge-oriented branching strategy where edges are ordered alphabetically based on the degeneracy ordering of their endpoints. \texttt{HBBMC-mdg} also uses an edge-oriented branching strategy, but the edge ordering is based on a non-decreasing order of an upper bound derived from the minimum degree of their endpoints.
The results are shown in Table~\ref{tab:revise1}. We have the following observations. \underline{First}, \texttt{HBBMC++} (with the proposed hybrid branching) outperforms \texttt{VBBMC-dgn} (with the existing vertex-oriented branching) across all datasets, with significant improvements particularly on larger datasets. This demonstrates the superiority of the porposed hybrid branching, particularly the edge-oriented branching, and is well aligned with the theoritical results in Theorem~\ref{theo:hbbmc}. \underline{Second}, \texttt{HBBMC++} consistently outperforms both \texttt{HBBMC-dgn} and \texttt{HBBMC-mdg}, which verifies the effectiveness of truss-based edge ordering towards reducing computational overhead. We remark that, without adopting the truss-based edge ordering, \texttt{HBBMC-dgn} and \texttt{HBBMC-mdg} cannot acheive the same time complexity as \texttt{HBBMC++} does.

\section{Related Work}
\label{sec:related}

%
Sequential algorithms for maximal clique enumeration (MCE) problem 
can be classified into two categories: branch-and-bound based algorithms \cite{bron1973algorithm, tomita2006worst, eppstein2010listing, eppstein2013listing, li2019fast, naude2016refined} and reverse-search based algorithms \cite{johnson1988generating, chang2013fast, conte2016sublinear, tsukiyama1977new, chiba1985arboricity, makino2004new, comin2018improved}. 
%
{\chengc For the former,}
Bron-Kerbosch (\texttt{BK}) algorithm \cite{bron1973algorithm} is the first practical algorithm for MCE, in which each branch is represented by three vertex sets $S$, $C$ and $X$ ({\cheng correspondingly,} $g_C=G[C]$ and $g_X=G[X]$ are the candidate and {\kaixin exclusion}
subgraphs,
respectively). \texttt{BK\_Pivot} \cite{tomita2006worst} improves \texttt{BK} {\cheng with} the pivoting strategy, which selects a vertex $u\in C\cup X$ that maximizes $|N(u,G)\cap C|$ as the pivot in each branch. It has been {\cheng proved} that the worst-case time complexity of \texttt{BK\_Pivot} is $O(n\cdot 3^{n/3})$, which is worst-case optimal since the Moon-Moser graph \cite{moon1965cliques} truly has $3^{n/3}$ maximal cliques inside. \texttt{BK\_Ref} \cite{naude2016refined} refines and accelerates \texttt{BK\_Pivot} by considering some special cases during the pivot selection {\cheng process} but retains the same worst-case time complexity as that of \texttt{BK\_Pivot}. Recently, different vertex orderings are adopted to improve \texttt{BK\_Pivot} on sparse graphs. Specifically, \texttt{BK\_Degen} \cite{eppstein2010listing, eppstein2013listing} uses the degeneracy ordering {\cheng for the branching at the initial branch} $B=(\emptyset, G, G[\emptyset])$ such that for each produced sub-branch, there are at most $\delta$ vertices in the candidate subgraph $g_C$, where $\delta$ is the degeneracy of the graph, and correspondingly, the worst-case time complexity improves to $O(n\delta\cdot 3^{\delta/3})$. Similarly, \texttt{BK\_Degree} \cite{xu2014distributed} uses the degree ordering to branch the initial search space. In this case, there are at most $h$ vertices in the candidate subgraph $g_C$, where $h$ is the $h$-index of graph, and the worst-case time complexity is $O(nh\cdot 3^{h/3})$. \texttt{BK\_Rcd} \cite{li2019fast} extends \texttt{BK\_Degen} {\cheng with} a top-to-down approach, {\chengB which} repeatedly conducts branching at the vertex with the smallest degree in $g_C$ until $g_C$ {\cheng becomes} a clique. \texttt{BK\_Fac} \cite{jin2022fast} aims to choose the pivot efficiently. Given a branch $B$, instead of finding an optimal pivot vertex that could minimize the number of produced sub-branches before the branching steps, \texttt{BK\_Fac} first {\cheng initializes} an arbitrary vertex in $g_C$ as pivot and iteratively changes to a new one if branching on it would produce fewer branches. However, given a branch $B=(S,g_C,g_X)$, \texttt{BK\_Rcd} and \texttt{BK\_Fac} would produce $2^{|V(g_C)|}$ sub-branches in total in the worst-case, the time complexities of these two algorithms are $O(n\delta \cdot 2^{\delta})$. In \cite{deng2023accelerating}, several graph reduction rules are proposed to improve the performance pratically. We note that these rules are orthogonal to frameworks. 
%

{\chengc For the latter,}
they can be viewed as different instances of the general reverse-search enumeration framework \cite{avis1996reverse}. {\cheng It} traverses from one solution to another solution following a depth-first search (DFS) procedure on an implicit connected solution graph. 
%
Although reverse-search based algorithms can guarantee that the time cost between two consecutive outputs is bounded by a polynomial function of the input size (i.e., polynomial delay), which implies that the time complexity of these algorithm is proportional to the number of maximal cliques within the input graph (i.e., output-sensitive), their performances on real-world graphs are not as {\cheng competitive} as those of branch-and-bound based algorithms \cite{das2018shared}. 

There are also numerous studies that develop parallel algorithms for MCE \cite{han2018speeding, inoue2014faster, wu2009distributed, schmidt2009scalable, lu2010dmaximalcliques, xu2015distributed, abu2017genome, lessley2017maximal, du2006parallel, cheng2012fast, blanuvsa2020manycore, das2020shared, brighen2019listing, chen2016parallelizing, hou2016efficient, svendsen2015mining}.
These algorithms extend the techniques that are originally proposed for sequential algorithms and aim to balance and reduce the computation overheads to achieve higher degrees of parallelism. Besides, there are also several related studies that (1) solve {\cheng the} MCE problem on other graph types, such as uncertain graphs \cite{dai2022fast, mukherjee2015mining}, temporal graphs \cite{himmel2016enumerating}, attributed graphs \cite{pan2022fairness, zhang2023fairness} and graph streams \cite{das2019incremental, sun2017mining}, and (2) solve MCE's counterpart problem, such as maximal biclique enumeration on bipartite graph \cite{chen2022efficient, abidi2020pivot} and other maximal subgraph enumeration problems \cite{yu2023fast}. We note that the techniques proposed for these problems are not suitable for general MCE problem.

\section{Conclusion}
\label{sec:conclusion}

In this paper, we study the maximal clique enumeration (MCE) problem. 
We propose a hybrid branch-and-bound (BB) algorithm, \texttt{HBBMC}, integrating both vertex-oriented and edge-oriented branching strategies, achieving a time complexity 
{\chengc which is better than the best-known one under a condition that holds for the majority of real-world grpahs.}
To further boost efficiency, we introduce an early termination technique, {\chengB which is} applicable across BB frameworks. Extensive experiments demonstrate the efficiency of our algorithms and the effectiveness
of proposed techniques.

\smallskip
\noindent\textbf{Acknowledgments.}
This research is supported by the Ministry of Education, Singapore, under its Academic Research Fund (Tier 2 Award MOE-T2EP20221-0013, Tier 2 Award MOE-T2EP20220-0011, and Tier 1 Award (RG20/24)). Any opinions, findings and conclusions or recommendations expressed in this material are those of the author(s) and do not reflect the views of the Ministry of Education, Singapore.

\appendices


\section{Time Complexities of \texttt{VBBMC} Algorithms}

\label{sec:app-vbb-time}

\begin{table}[h]
\small
    \centering
    \caption{The worst-case time complexities of existing \texttt{VBBMC} algorithms, where $n$, $h$ and $\delta$ are the number of the vertices, the $h$-index and the degeneracy of the graph, respectively.}
    \begin{tabular}{c|c}
    \hline
       Algorithm name & Time complexity \\ \hline
        \texttt{BK} \cite{bron1973algorithm} &  $O(n \cdot (3.14)^{n/3})$ \\
        \texttt{BK\_Pivot} \cite{tomita2006worst}, \texttt{BK\_Ref} \cite{naude2016refined} & $O(n \cdot 3^{n/3})$ \\
        \texttt{BK\_Degree} \cite{xu2014distributed} & $O(hn\cdot 3^{h/3})$ \\
        \texttt{BK\_Degen} \cite{eppstein2010listing, eppstein2013listing} & $O(\delta n \cdot 3^{\delta / 3})$ \\
        \texttt{BK\_Rcd} \cite{li2019fast}  & $O(\delta n \cdot 2^{\delta})$ \\ 
        \texttt{BK\_Fac} \cite{jin2022fast} & $O(\delta n \cdot (3.14)^{\delta/3})$\\ \hline
    \end{tabular}
    \label{tab:complexity}
\end{table}

Different variants of \texttt{VBBMC} have different time complexities, which we summarize in Table~\ref{tab:complexity}. We note that \texttt{BK\_Rcd} \cite{li2019fast} and \texttt{BK\_Fac} \cite{jin2022fast} do not provide the worst-case time {\cheng complexity} analysis in their paper. We provide 
{\cheng the analysis} as follows.

\subsection{Time Complexity of \texttt{BK\_Rcd}}

We present the algorithm details of \texttt{BK\_Rcd} \cite{li2019fast} in Algorithm~\ref{alg:bkrcd}. Each branch is represented by three vertex sets $S$, $C$ and $X$ (correspondingly, $g_C=G[C]$ and $g_X=G[X]$ are the candidate and exclusion subgraph as introduced in Section~\Romannum{3}). 

\begin{theorem}
    Given a graph $G=(V,E)$, the worst-case time complexity of \texttt{BK\_Rcd} is $O(n \delta \cdot 2^{\delta})$, where $\delta$ is the degeneracy of the graph.
\end{theorem}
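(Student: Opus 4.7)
The plan is to decompose the running time into two phases mirroring the classical degeneracy-based analysis of \texttt{BK\_Degen}~\cite{eppstein2010listing,eppstein2013listing}: (i) the branching at the initial branch $B=(\emptyset, G, G[\emptyset])$ using the degeneracy ordering, and (ii) the recursive top-to-down enumeration of \texttt{BK\_Rcd} carried out inside each first-level sub-branch. The key substitution relative to \texttt{BK\_Degen} is that the base-3 pivoting factor is replaced by a base-2 factor, because \texttt{BK\_Rcd} branches binarily by including or excluding the smallest-degree vertex of $g_C$ until $g_C$ becomes a clique.

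First, I would argue that the degeneracy ordering $v_1,\ldots,v_n$ used at the initial branch produces exactly $n$ sub-branches $B_i=(S_i,g_{C_i},g_{X_i})$ via Eq.~(\ref{eq:vbb-branching}), and that after the refinement step each $|V(g_{C_i})|$ is at most $\delta$. This follows from the standard property of degeneracy: when $v_i$ is added to $S_i$, the candidate graph $g_{C_i}$ is confined to $v_i$'s later neighbors in the ordering, whose number is bounded by $\delta$ by definition of degeneracy. Thus after this first level the problem reduces to solving at most $n$ instances whose candidate graphs each contain $\le\delta$ vertices.

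Next, I would bound the recursion subtree of \texttt{BK\_Rcd} rooted at any such $B_i$. Because \texttt{BK\_Rcd} selects the vertex $u$ of minimum degree in the current $g_C$ and produces two branches (one including $u$, whose candidate graph becomes $g_C[N(u,g_C)]$, and one excluding $u$, whose candidate graph is $g_C\setminus\{u\}$), the recursion forms a binary tree whose depth is bounded by $|V(g_{C_i})|\le\delta$, yielding at most $2^{\delta}$ leaves and thus $O(2^{\delta})$ recursive calls in total per initial sub-branch. The per-call cost is dominated by (a) identifying the minimum-degree vertex in $g_C$ and (b) forming the updated $g_C$ and $g_X$ by intersection with $N(u,G)$; since both $g_C$ and $g_X$ live inside a subgraph of size $O(\delta)$ at this point, each such update takes $O(\delta)$ time provided the degrees inside $g_C$ are maintained incrementally as vertices are inserted/removed.

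Combining the three ingredients gives $n$ initial sub-branches $\times$ $O(2^{\delta})$ recursive calls per sub-branch $\times$ $O(\delta)$ work per call, i.e.\ $O(n\delta\cdot 2^{\delta})$, as required. The main obstacle I anticipate is justifying the $O(\delta)$ per-call cost cleanly: naively recomputing the smallest-degree vertex and the updated neighborhoods from scratch at every recursive call would cost $O(\delta^2)$ and give a weaker $O(n\delta^2\cdot 2^{\delta})$ bound, so the argument must rely on an incremental maintenance of vertex degrees along each root-to-leaf path of the recursion tree, amortized against the $O(\delta)$-size local subgraph. Once this bookkeeping is handled, everything else is a direct combination of the degeneracy-bound and a binary branching count.
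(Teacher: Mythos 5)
Your proposal is correct and follows essentially the same route as the paper's proof: the degeneracy ordering at the initial branch bounds each first-level candidate graph by $\delta$ vertices, and the min-degree removal process inside each sub-branch is then charged at $O(2^{\delta})$ recursive calls of $O(\delta)$ work each. Your binary include/exclude tree is just a reformulation of the paper's recurrence $T(c,x)\le\sum_{i=2}^{c-1}T(i,x)$ (which equals $2\,T(c-1,x)$, hence $O(2^{c})$), and your explicit attention to amortizing the per-call cost to $O(\delta)$ is, if anything, more careful than the paper, which leaves that bookkeeping implicit.
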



\begin{proof}
    From line 6 and lines 8-9, we observe that, given a branch $B=(S,C,X)$, the algorithm iteratively removes the vertex with the minimum degree in the remaining candidate subgraph $G[C]$. This corresponds to using degeneracy ordering of $G[C]$ for branching. Let $T(c,x)$ be the total time cost to enumerate all maximal cliques within a branch $B=(S,C,X)$, where $|C|=c$ and $|X|=x$. Consider the initial branch. Based on the degeneracy ordering, the candidate subgraph of each produced sub-branch have at most $\delta$ vertices. Consider the branches other than the initial branch. While removing the vertex with the minimum degree is intuitive enough such that the remaining graph $G[C]$ is possibe to be a non-trivial clique (other than a vertex or an edge) at line 10, it is not always the case. Consider an example as illustrated in Figure~\ref{fig:worst}. The number in each vertex represents the index of the vertex. It is easy to check that the sequence $\{v_1, v_2, v_3, v_4, v_5, v_6, v_7, v_8\}$ is a valid degeneracy ordering. However, only when $C$ has $\{v_7,v_8\}$ inside, it is a clique (i.e., an edge). 
    All above examples demonstrate that given a branch $B$, the \texttt{BK\_Rcd} would produce $|C|-2$ branches at the worst case. Thus, we have the following recurrence. 
    \begin{equation}
T(c, x) \le \left\{
\begin{array}{ll}
    O(\delta) & c = 0\\ 
    O(|E|) + \sum_{v\in V} T(\delta, \Delta) &  c = |V| \\ 
    \sum_{i=2}^{c-1} T(i, x) & c\neq 0 \\
\end{array}
\right.
\end{equation}
By solving the recurrence, we have the time complexity of \texttt{BK\_Rcd}, which is $O(\delta n \cdot 2^{\delta})$. 
\end{proof}

\begin{algorithm}[th]
\small
\caption{\texttt{BK\_Rcd}~\cite{li2019fast}}
\label{alg:bkrcd}
\KwIn{A graph $G=(V,E)$}
\KwOut{All maximal cliques within $G$}
\SetKwFunction{List}{\textsf{BK\_Rcd\_Rec}}
\SetKwProg{Fn}{Procedure}{}{}
\List{$\emptyset, V, \emptyset$}\;
\Fn{\List{$S, C, X$}} {
    \tcc{Termination if $C, X$ are empty}
    \If{$C\cup X=\emptyset$}{
        \textbf{Output} a maximal clique $S$; \Return\;
    }
    \tcc{Branching when $G[C]$ is not a clique}
    \While{$G[C]$ is not a clique}{
        Choose $v\in C$ that minimizes $|N(u, G[C])|$\;
        \List{$S\cup \{v\}, C\cap N(v, G), X\cap N(v, G)$}\;
        $C\gets C\setminus \{v\}$\;
        $X\gets X\cup \{v\}$\;
    }
    \tcc{Check maximality}
    \If{$C\neq \emptyset$ and $N(C, G)\cap X=\emptyset$}{
            \textbf{Output} a maximal clique $S\cup C$\;
    }
    
}
\end{algorithm}

\begin{figure}[th]
    \centering
    \includegraphics[width=0.25\textwidth]{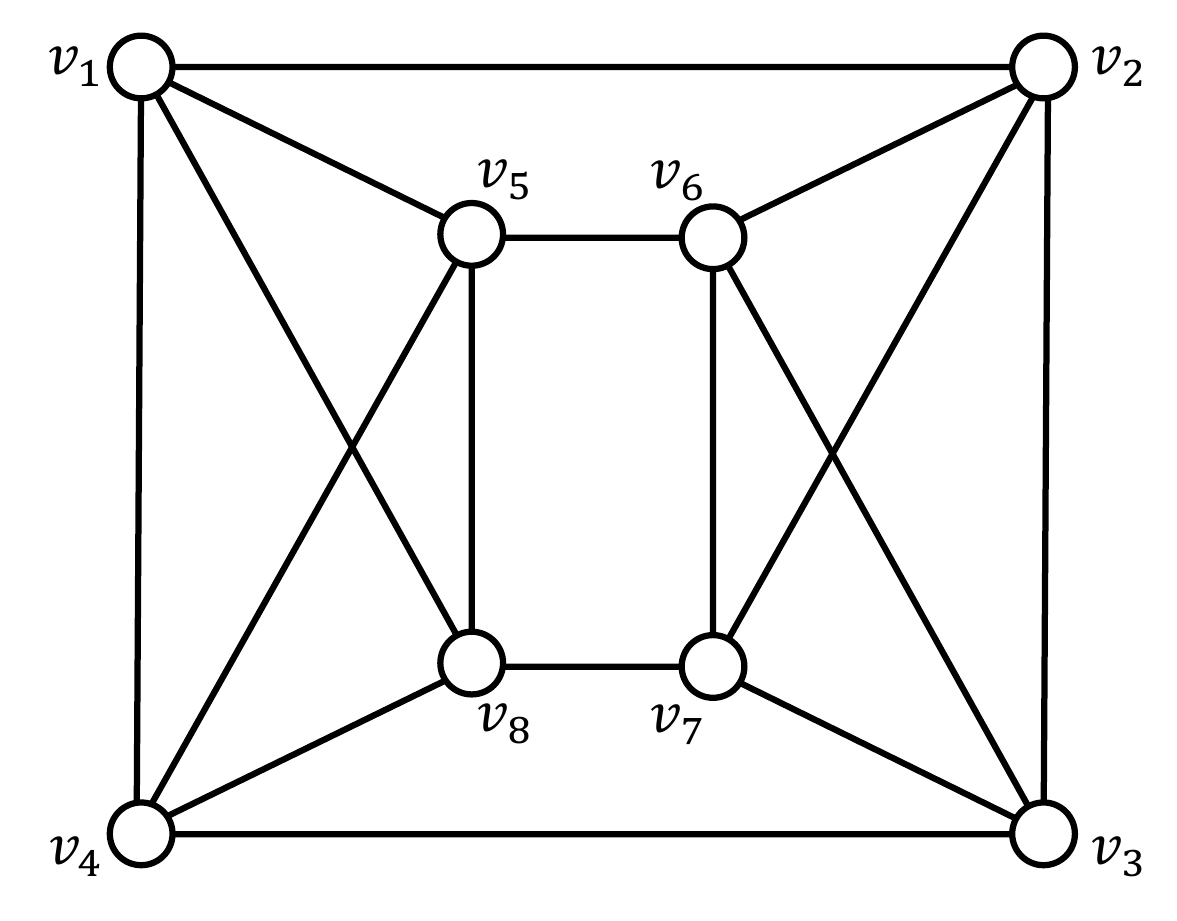}
    \caption{A worst case for \textsf{BK\_Rcd} algorithm.}
    \label{fig:worst}
\end{figure}

\begin{algorithm}[t]
\caption{\texttt{BK\_Fac}~\cite{jin2022fast}}
\label{alg:bkfac}
\KwIn{A graph $G=(V,E)$}
\KwOut{All maximal cliques within $G$}
\SetKwFunction{List}{\textsf{BK\_Fac\_Rec}}
\SetKwProg{Fn}{Procedure}{}{}
Let $v_1, \cdots, v_n$ be in the degeneracy ordering of $G$\;
\For{each $v_i\in V$}{
$C_i\gets N(v_i, G)\cap \{v_1, \cdots, v_{i-1}\}$\;
$X_i\gets N(v_i, G)\cap \{v_{i+1}, \cdots, v_n\}$\;
\List{$\{v_i\}, C_i, X_i$}\;
}
\Fn{\List{$S, C, X$}} {
    \tcc{Termination if $C, X$ are empty}
    \If{$C\cup X=\emptyset$}{
        \textbf{Output} a maximal clique $S$; \Return\;
    }
    \tcc{Initialize a pivot $v$, create branches on $P$}
    $v \gets$ an arbitrary vertex in $C$\;
    $P \gets C \setminus N(v, G)$\; 

    \For{each $u\in P$}{
        \List{$S\cup \{u\}, C\cap N(u, G), X\cap N(u,G)$}\;
        $C\gets C\setminus \{u\}$\;
        $X\gets X\cup \{u\}$\;
        \tcc{Update $P$ if possible}
        $P\gets P\setminus \{u\}$\;
        $P'\gets C\setminus N(u, G)$\;
        \lIf{$|P'|<|P|$}{$P\gets P'$}

    }
}
\end{algorithm}

\subsection{Time Complexity of \texttt{BK\_Fac}}

We present the algorithm details of \texttt{BK\_Fac} \cite{jin2022fast} in Algorithm~\ref{alg:bkfac}. Similarly, each branch is represented by three sets $S$, $C$ and $X$ (correspondingly, $g_C=G[C]$ and $g_X=G[X]$ are the candidate and exclusion subgraph as introduced in Section~\Romannum{3}).

\begin{theorem}
    Given a graph $G=(V,E)$, the worst-case time complexity of \texttt{BK\_Fac} is $O(\delta n \cdot (3.14)^{\delta/3})$, where $\delta$ is the degeneracy of the graph. 
\end{theorem}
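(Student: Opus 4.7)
The plan is to split the total running time into (i) the preprocessing done by the outer loop of Algorithm~\ref{alg:bkfac} and (ii) the recursive enumeration done by \textsf{BK\_Fac\_Rec}. For the outer loop, I would first note that a degeneracy ordering of $G$ can be computed in $O(m+n)$ time, and that for each vertex $v_i$ the sets $C_i = N(v_i,G)\cap\{v_1,\ldots,v_{i-1}\}$ and $X_i = N(v_i,G)\cap\{v_{i+1},\ldots,v_n\}$ are built in $O(d(v_i))$ time. The defining property of the degeneracy ordering gives $|C_i|\le\delta$, so the total preprocessing cost and the total cost of set-intersection overhead across the $n$ initial branches is $O(\delta n)$.

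Next I would set up a recurrence for the recursive function. Let $T(c)$ denote the worst-case time of a call whose candidate set $C$ has size $c$; since the outer loop caps $|C_i|\le\delta$, what I need to bound is $T(\delta)$. Inside one call, Algorithm~\ref{alg:bkfac} selects an arbitrary $v\in C$, forms $P = C\setminus N(v,G)$, and then branches at each $u\in P$ on the subproblem with candidate set $C\cap N(u,G)$. The subtle step is the pivot-update in lines 13--15: whenever a branching vertex $u$ yields $P' = C\setminus N(u,G)$ with $|P'|<|P|$, the algorithm replaces $P$ by $P'$, so the effective number of remaining branches is monotonically non-increasing during the for-loop. This monotonicity is what I would exploit to argue that the branching factor of \texttt{BK\_Fac} can never be worse than that of the original pivot-free Bron-Kerbosch algorithm.

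With that observation, I would write the recurrence in the form $T(c) \le \sum_{i}T(c_i) + O(\delta\cdot c)$, where the $c_i$ are the (non-increasing) sizes of the child candidate sets. The key step is to argue that, in the worst case, this recurrence is dominated by the recurrence of classical pivot-free \texttt{BK}, which Moon and Moser~\cite{moon1965cliques} showed is maximized on the Moon--Moser graphs and has characteristic growth rate $(3.14)^{c/3}$. Unlike \texttt{BK\_Pivot}, which achieves the tighter Tomita bound $3^{c/3}$ only because it picks a pivot maximizing $|N(v)\cap C|$ up front, \texttt{BK\_Fac} starts from an arbitrary pivot, so the adversary can force the initial $|P|$ to be as large as in pivot-free \texttt{BK}; the pivot-update step only ensures that we do not do worse. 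Substituting $c=\delta$ thus yields $T(\delta) = O((3.14)^{\delta/3})$ per recursive call, and multiplying by the $n$ initial branches and the $O(\delta)$ polynomial overhead per call gives the advertised bound $O(\delta n\cdot(3.14)^{\delta/3})$.

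The main obstacle will be the third step: rigorously justifying that the pivot-update rule does not shave the $3.14$ down toward $3$. I would handle this either by exhibiting an adversarial instance (a Moon--Moser-like graph together with an initial-pivot choice) on which the recurrence meets the pivot-free bound exactly, or by a direct dominance argument showing that the \texttt{BK\_Fac} recursion tree is a subtree of the pivot-free \texttt{BK} recursion tree on the same instance, so that invoking the Moon--Moser upper bound suffices. Everything else amounts to standard polynomial bookkeeping of set intersections, which contributes the $\delta$ factor but not to the exponential base.
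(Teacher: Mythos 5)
Your proposal follows essentially the same route as the paper: the degeneracy ordering caps each initial candidate set at $\delta$, and the recursive part is bounded by observing that the arbitrary initial pivot plus the pivot-update rule may, in the worst case, do no better than the second Bron--Kerbosch algorithm, whose recurrence gives $O(|C|\cdot(3.14)^{|C|/3})$ per branch and hence $O(\delta n\cdot(3.14)^{\delta/3})$ overall. One small correction: the $(3.14)^{c/3}$ growth rate is the running-time bound of Bron--Kerbosch's second algorithm, not a result of Moon and Moser, whose $3^{n/3}$ bound concerns the number of maximal cliques rather than this recurrence.
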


\begin{proof}
    Based on the degeneracy ordering, each produced sub-branch from line 5 will have at most $\delta$ vertices in $C_i$, where $\delta$ is the degeneracy of the graph. Then consider the time complexity of the \textsf{BK\_Fac\_Rec}. 
    We have the following observations. First, the initial pivot is chosen from $C$ (line 9) and $P$ is the vertex set that includes those vertices at which the branching steps are conducted (line 10). Second, although lines 15-17 would update $P$ to reduce the number of produced branches, there exists the worst case in which $P$ is not changed. This worst case is the same as the second algorithm proposed in \cite{bron1973algorithm}, whose time complexity is $O(|C|\cdot(3.14)^{|C|/3})$ given a branch $B=(S,C,X)$. In summary, the time complexity of \textsf{BK\_Fac} is $O(\delta n \cdot (3.14)^{\delta/3})$. 
\end{proof}

\section{Proof of Theorem 2}
\label{app:proof-theorem2}

\begin{proof}
The running time of \texttt{HBBMC} consists of (1) the time of generating the truss-based edge ordering and obtaining the subgraphs (i.e., $g_C$ and $g_X$) for enumeration, which can be done in $O(\delta m)$ time, where $\delta$ is the degeneracy of the graph, and (2) the time of recursive enumeration procedure.
    
\underline{Consider the former}. First, the time complexity of generating the truss-based edge ordering is $O(\delta m)$ \cite{wang2024technical}. Besides, the time of getting the subgraphs can also be bounded by $O(\delta m)$. The reason is as follows. First, for all edges $e=(u,v)\in E$, we can compute the set of common neighbors of $u$ and $v$ (i.e., $N(V(e), G)$ for $e\in E$) in $O(\delta m)$ time by any existing triangle listing algorithm \cite{hu2014efficient,chu2011triangle,latapy2008main}. Second, consider an edge $e=(u,v)\in E$. For each $w\in N(V(e), G)$, $w\in V(g_C)$ if both edges $(u, w)$ and $(v, w)$ are ordered after $e$; otherwise, $w\in V(g_X)$. This process can also be done in $O(\delta m)$. To see this, we note that $u$, $v$ and $w$ always form a triangle (i.e., 3-clique), indicating that the time cost for getting all subgraphs for the sub-branches is proportional to the number of triangles in $G$. According to the existing triangle listing algorithms \cite{hu2014efficient,chu2011triangle,latapy2008main}, given a graph with its degeneracy number equal to $\delta$, the number of triangles inside is bounded by $O(\delta m)$. With $V(g_C)$ and $V(g_X)$, it is sufficient to recursively conduct the vertex-oriented branching strategy for the remaining branches.

\underline{Consider the latter}, we analyze the time costs for the recursive process. We first show two lemmas that is necessary for the proof. 

\begin{lemma}[Theorem 3 of \cite{tomita2006worst}]
\label{lemma:1}
Let $F(n)$ be a function which satisfies the following recurrence relation
\begin{equation}
    F(n)\le\left\{
\begin{array}{lr}
    \max_{k} \{k\cdot F(n-k)\} + O(n^2) & \quad \text{if } n > 0 \\
     O(1) & \quad \text{if } n=0
\end{array}
    \right.
\end{equation}
where $n$ and $k$ are positive integers. Then $F(n) = O(3^{n/3})$.
\end{lemma}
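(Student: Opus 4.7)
The plan is to prove $F(n) = O(3^{n/3})$ by first establishing the key scalar inequality $g(k) := k \cdot 3^{-k/3} \le 1$ for every positive integer $k$, with equality exactly at $k = 3$. This reduces to observing that the continuous function $f(x) = x \cdot 3^{-x/3}$ has a unique maximum at $x = 3/\ln 3 \approx 2.73$, so on the positive integers the maximum is attained at either $k = 2$ or $k = 3$, and a direct check gives $g(3) = 1 > g(2) = 2 \cdot 3^{-2/3}$ and $g(k) < 1$ for $k \ge 4$ by monotonicity beyond the peak.

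With this inequality in hand, I would substitute $F(n) = 3^{n/3}\, G(n)$ to turn the multiplicative recurrence into an essentially additive one for $G(n)$. Dividing the given recurrence by $3^{n/3}$, for the $k^{*}$ achieving the maximum on the right-hand side we obtain
\begin{equation*}
    G(n) \le g(k^{*})\, G(n - k^{*}) + c \cdot n^{2}\cdot 3^{-n/3} \le G(n - k^{*}) + c \cdot n^{2} \cdot 3^{-n/3},
\end{equation*}
where $c$ is the implicit constant from the $O(n^{2})$ term and the second step uses $g(k^{*}) \le 1$. Setting $H(n) := \max_{0 \le m \le n} G(m)$ and using $k^{*} \ge 1$ (so $n - k^{*} \le n - 1$), this yields the one-step inequality $H(n) \le H(n - 1) + c \cdot n^{2} \cdot 3^{-n/3}$.

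Telescoping then gives $H(n) \le H(0) + c \sum_{j=1}^{n} j^{2} \cdot 3^{-j/3}$. The infinite tail $\sum_{j \ge 1} j^{2} \cdot 3^{-j/3}$ converges (e.g.\ by the ratio test, since $3^{-j/3}$ decays geometrically and dominates the polynomial factor $j^{2}$), so $H(n) = O(1)$. Therefore $G(n) = O(1)$, which upon unsubstituting gives $F(n) = O(3^{n/3})$, as required. The base case $F(0) = O(1)$ fixes $G(0) = O(1)$, seeding the telescoping bound.

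The main obstacle is that the natural induction hypothesis $F(n) \le c \cdot 3^{n/3}$ does not close: because $g(k)$ attains its upper bound exactly at $k = 3$, there is no multiplicative slack with which to absorb the additive $O(n^{2})$ term, and a direct inductive step merely recovers $F(n) \le c \cdot 3^{n/3} + O(n^{2})$. The substitution $F = 3^{n/3} G$ is precisely the device that reshapes the additive polynomial error into a \emph{geometrically decaying} contribution $n^{2}\cdot 3^{-n/3}$, which is summable; identifying this substitution (equivalently, knowing that the error must be discounted against the leading exponential growth) is the crucial conceptual step of the proof.
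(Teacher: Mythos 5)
The paper never proves this lemma: it is imported verbatim as Theorem~3 of \cite{tomita2006worst}, so there is no in-paper argument to compare yours against, and your submission is in effect the only proof on the table. It is correct. The two ingredients you isolate --- the scalar bound $k\cdot 3^{-k/3}\le 1$ for all positive integers $k$, with the maximum attained at $k=3$, and the normalization $F(n)=3^{n/3}G(n)$ that converts the additive $O(n^{2})$ overhead into the summable perturbation $c\,n^{2}3^{-n/3}$ --- are exactly what the recurrence requires, and your diagnosis of why the naive induction $F(n)\le c\cdot 3^{n/3}$ fails to close (zero multiplicative slack at $k=3$) is the right one; the original argument of Tomita et al.\ must confront the same obstruction and likewise discounts the polynomial term against the exponential, so your route is faithful to the source while being cleaner to verify. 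Two points deserve to be made explicit rather than left implicit: (i) the step $g(k^{*})\,G(n-k^{*})\le G(n-k^{*})$ uses $G\ge 0$, i.e.\ $F\ge 0$, which is automatic for a running-time function but is not stated in the lemma and should be assumed up front; and (ii) the maximum is over the finite range $1\le k\le n$ (so that $n-k\ge 0$), which guarantees that $k^{*}$ exists and that $n-k^{*}\le n-1$, the fact that licenses the passage to $H(n)=\max_{0\le m\le n}G(m)$ and the one-step telescoping $H(n)\le H(n-1)+c\,n^{2}3^{-n/3}$. With those two remarks added, the proof is complete.
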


\begin{lemma}
\label{lemma:2}
    Given a graph $G=(V, E)$ with its truss number equal to $\tau$. The number of triangles in $G$ is no more than $m\tau$. 
\end{lemma}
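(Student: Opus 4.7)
The plan is to leverage the truss-based edge ordering of $G$ as a bookkeeping device that charges each triangle to exactly one of its three edges, and then to bound the charge at each edge by $\tau$.

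First, I would fix the truss-based edge ordering $\pi_{\tau}(G) = \langle e_1, e_2, \ldots, e_m\rangle$, where $e_1$ is the first edge removed and $e_m$ is the last. For each index $i$, let $G_i$ denote the graph $G\setminus\{e_1, e_2, \ldots, e_{i-1}\}$ obtained just before $e_i$ is peeled, and let $N_{e_i}$ denote the number of common neighbors of the two endpoints of $e_i$ in $G_i$. By the definition of the truss-based procedure and the truss-related number $\tau$ adopted in this paper (see \cite{wang2024technical}), we have $N_{e_i}\le \tau$ for every $1\le i\le m$.

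Next, I would set up the charging argument. Given any triangle $T=\{u,v,w\}$ in $G$, let $e(T)$ be the edge of $T$ that appears earliest in $\pi_{\tau}(G)$; that is, $e(T)$ is the first of the three edges of $T$ to be removed during the peeling procedure. I would charge $T$ to $e(T)$. The key observation is that $N_{e_i}$ is exactly equal to the number of triangles of $G$ that are charged to $e_i$: indeed, if $e_i=(u,v)$ and $w$ is a common neighbor of $u,v$ in $G_i$, then both $(u,w)$ and $(v,w)$ are still present in $G_i$, so $e_i$ is the first of the three edges of the triangle $\{u,v,w\}$ to be peeled; conversely, any triangle charged to $e_i$ contributes such a common neighbor. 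Hence each triangle is counted exactly once among the $N_{e_i}$'s.

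Finally, summing the charges yields
\begin{equation*}
\#\{\text{triangles in } G\} \;=\; \sum_{i=1}^{m} N_{e_i} \;\le\; \sum_{i=1}^{m} \tau \;=\; m\tau,
\end{equation*}
which is precisely the claimed bound. I expect the main subtlety to be a clean justification that $N_{e_i}\le \tau$ for every $i$, which follows directly from the definition of the truss-based ordering (at each step the edge with the minimum number of common neighbors in the remaining graph is removed, and $\tau$ is the maximum such minimum encountered), but requires citing the precise definition from \cite{wang2024technical}. The charging bijection itself is routine once this is in place.
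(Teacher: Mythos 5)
Your proposal is correct and follows essentially the same argument as the paper: the paper likewise charges each triangle to the earliest of its three edges in the truss-based ordering, identifies the charge at $e_i$ with the support $sup(e_i)$ (the common-neighbor count of $e_i$'s endpoints in the subgraph induced by $\{e_i,\ldots,e_m\}$, which is your $N_{e_i}$), and bounds each term by $\tau$ using the definition of the truss-related number. Your write-up is simply a more detailed spelling-out of the same charging argument.
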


\begin{proof}
    Let $\pi_{\tau}=\langle e_1, e_2, \cdots, e_m \rangle$ be the truss-based edge ordering and $sup(e_i)$ be the number of common neighbors in the subgraph induced by $\{e_i, e_{i+1}, \cdots, e_m\}$. Then, the number of triangles in $G$, denoted by $N$, is bounded by
    \begin{equation}
        N =\sum_{i=1}^m sup(e_i) \le \sum_{i=1}^m \tau =m\tau
    \end{equation}
    The first equation follows from the fact that each triangle is uniquely associated with the earliest edge in the truss-based ordering of its three constituent edges. The inequality follows from the definition of $\tau$. 
\end{proof}

Now, we are ready for the analysis. Consider a branch $B=(S,g_C,g_X)$ produced after the edge-oriented branching strategy, let $T(c,x)$ be the time complexity of solving $B$ with $c=|V(g_C)|$ and $x=|V(g_X)|$. 
The time cost consists of three components. First, it would take $O(c\cdot (c+x))$ to choose the pivot by checking each vertex in $V(g_C)\cup V(g_X)$ to identify the number of its neighbors in $g_C$. Second, assume that the pivot is $v^*$ and $v^*$ has $k$ ($k\le c$) non-neighbors in $g_C$ (denoted by $u_1, \cdots, u_k$), on which the sub-branches are produced. Correspondingly, for each sub-branch produced on $u_i$, it would take $O(c+x)$ to identify $u_i$'s neighbors in $g_C$ and $g_X$. In total, it would take $O(k\cdot (c+x))$ in total to construct $k$ sub-branches. Third, since $v^*$ is a pivot that has the maximum number of neighbors in $g_C$, i.e., $c-k$ neighbors, for each $u_i\in V(g_C)$, they all have at most $c - k$ neighbors in $g_C$. Thus, we have the following recurrence 
\begin{equation}
\label{eq:rec}
\begin{aligned}
    T(c,x) &\le \max_k \{k \cdot T(c-k,x)\} + \lambda \cdot c\cdot (c+x) \\
    &=(c+x) \cdot \Big( \max_k \{k \cdot \frac{T(c-k,x)}{c+x}\} + \lambda \cdot c \Big) \\
    &\le (c+x) \cdot \Big( \max_k \{k \cdot \frac{T(c-k,x)}{c-k+x}\} + \lambda \cdot c^2 \Big) \\
\end{aligned}
\end{equation}
The first inequality follows by summing up the three components of the time cost, where $\lambda>0$ is a constant. The equality follows by extracting $(c+x)$. To see the third inequality, we first have $\lambda\cdot c\le \lambda \cdot c^2$ since $c\ge 1$. Then, let $k^*$ be value of $k$ such that $k \cdot \frac{T(c-k,x)}{c+x}$ reaches the maximum. We have 
\begin{equation}
\begin{aligned}
    \max_k \{k \cdot \frac{T(c-k,x)}{c+x}\} &= k^* \cdot \frac{T(c-k^*,x)}{c+x} \\
    &\le k^* \cdot \frac{T(c-k^*,x)}{c-k^*+x} \\
    &\le  \max_k \{k \cdot \frac{T(c-k,x)}{c-k+x}\}
\end{aligned}
\end{equation}
Let $f(c)=\frac{T(c,x)}{(c+x)}$, we have 
\begin{equation}
    f(c)\le \max_k \{k \cdot f(c-k)\} + \lambda \cdot c^2
\end{equation}
According to Lemma~\ref{lemma:1}, we have $f(c)=\lambda'\cdot 3^{c/3}$, where $\lambda'>0$ is also a constant. Therefore, $T(c,x)=\lambda'\cdot(c+x)\cdot 3^{c/3}$. To compute the total time complexity, 
\begin{equation}
\begin{aligned}
    \sum_{e\in E} T(c_e, x_e) &=\sum_{e\in E} \lambda' \cdot (c_e + x_e) \cdot 3^{c_e/3}\\
    &\le \lambda' \cdot 3^{\tau/3} \cdot \sum_{e\in E} (c_e + x_e) \le \lambda'  \cdot 3^{\tau/3} \cdot 3m\tau
\end{aligned}
\end{equation}
The first inequality follows from the fact that $c_e\le \tau$ based on the definition of $\tau$. The second inequality follows from the facts that (1) for an edge $e$, $(c_e+x_e)$ is the number of triangles associated with $e$, and (2) the total number of triangles is no larger than $m\tau$ (Lemma~\ref{lemma:2}). In summary, the total time cost of \texttt{HBBMC} is $O(m \delta + m \tau \cdot 3^{\tau/3})$. 

\end{proof}

\section{Examples of Enumerating Maximal Cliques from 2-Plexes and 3-Plexes}
\label{app:example}

\begin{figure}[th]
\centering
\subfigure[A 2-plex graph $g_C$.]{
    \label{subfig:2-plex}
    \includegraphics[width=0.21\textwidth]{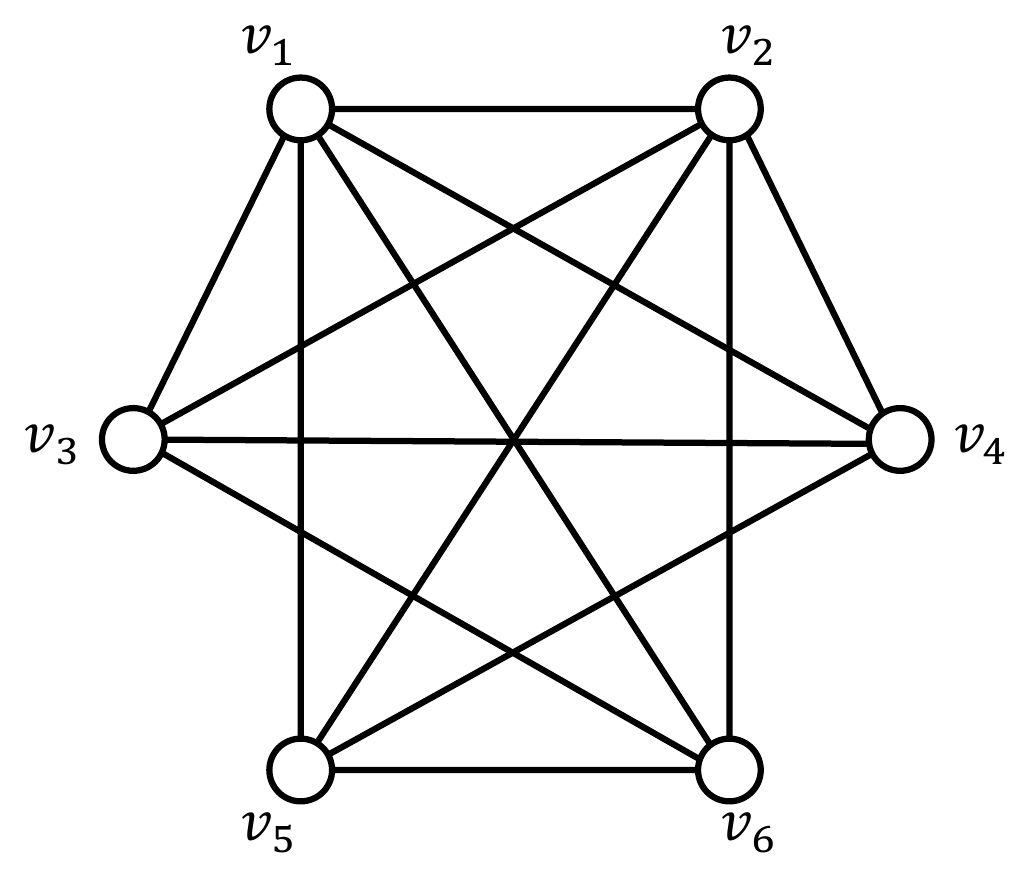}}
    \hspace{2mm}
\subfigure[Inverse graph $\overline{g}_C$ of $g_C$.]{
    \label{subfig:2-plex-inv}
    \includegraphics[width=0.21\textwidth]{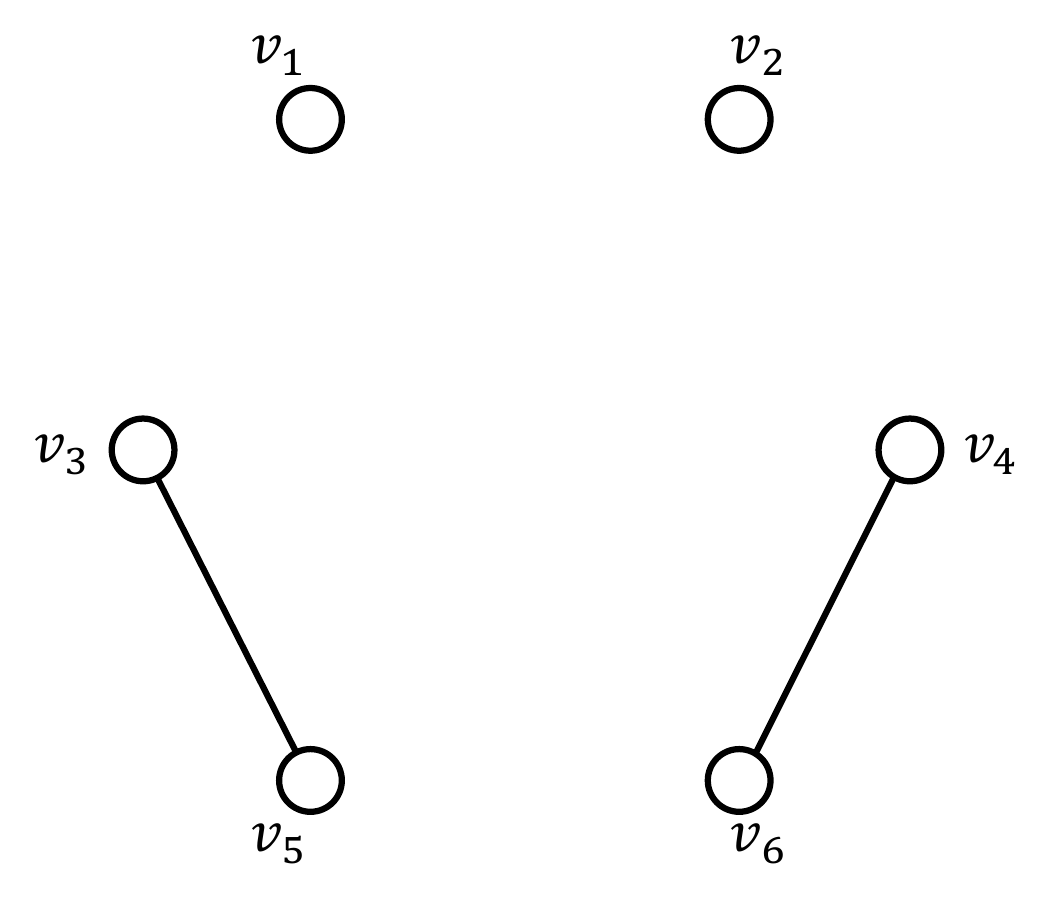}}
\caption{{Example of a 2-plex and its inverse graph.}} 
\label{fig:2-plex}
\end{figure}

\begin{figure}[th]
\centering
\subfigure[ 3-plex graph $g_C$.]{
    \label{subfig:3-plex}
    \includegraphics[width=0.21\textwidth]{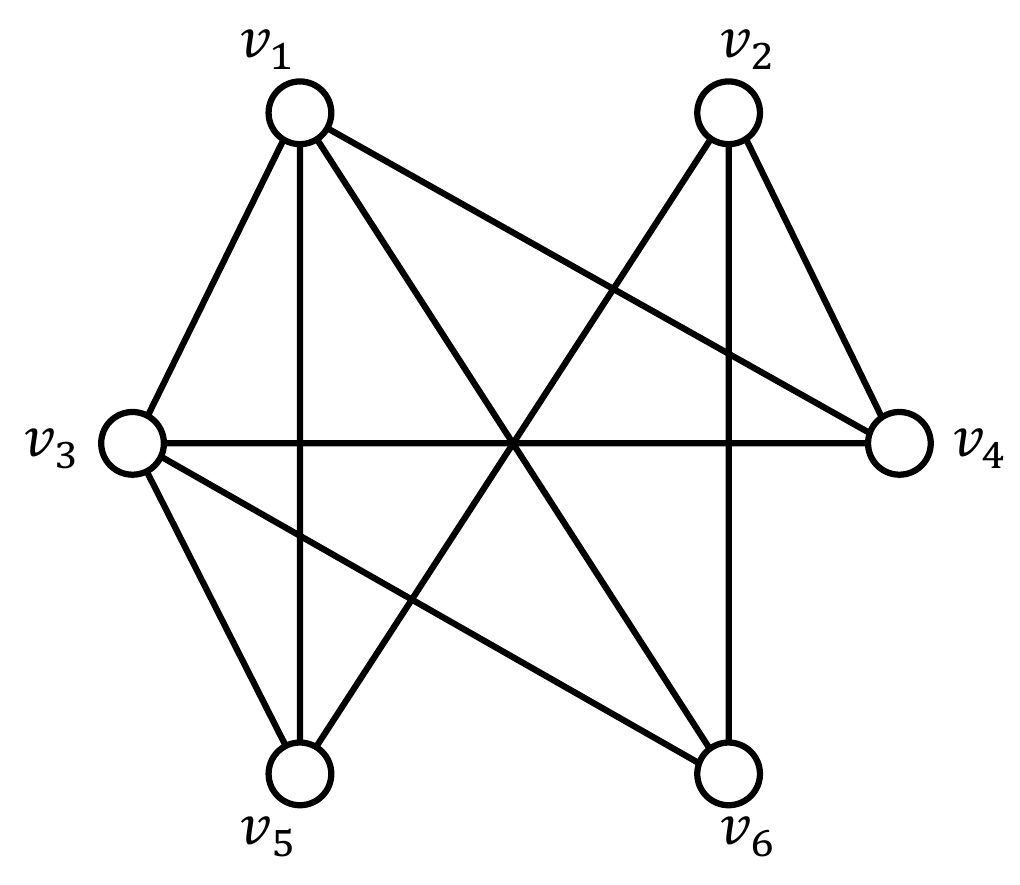}}
\hspace{2mm}
\subfigure[Inverse graph $\overline{g}_C$ of $g_C$.]{
    \label{subfig:3-plex-inv}
    \includegraphics[width=0.21\textwidth]{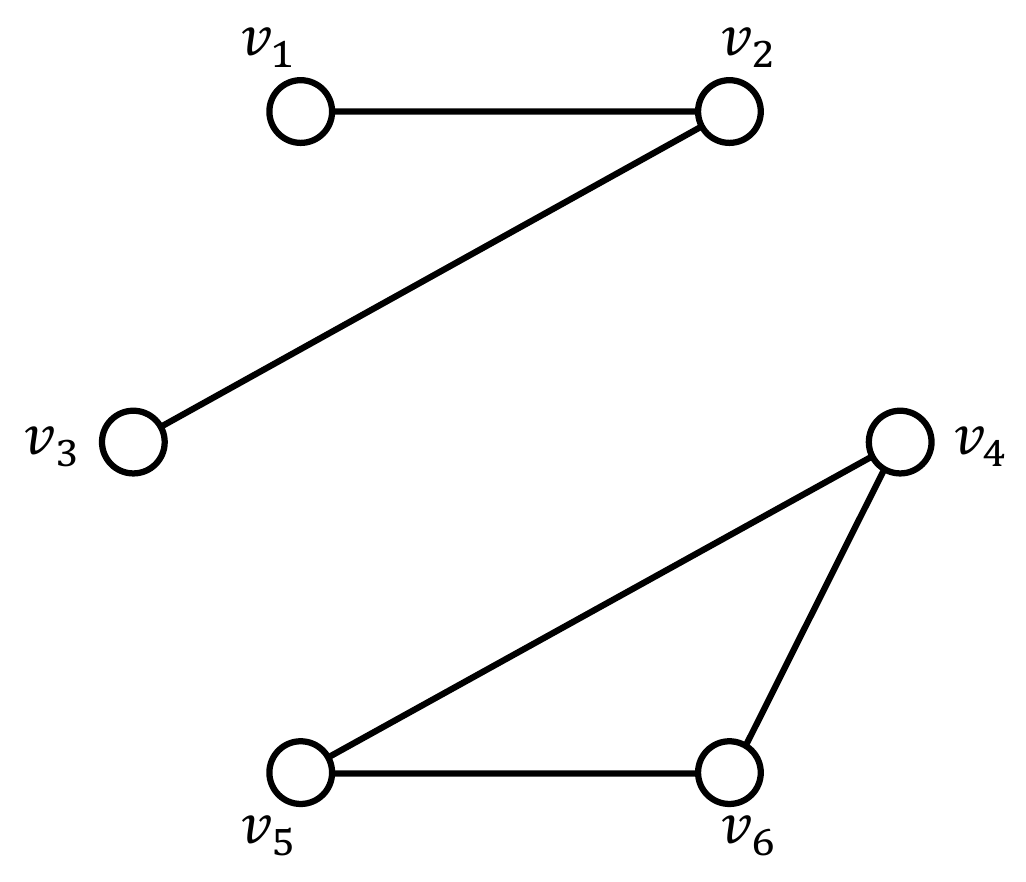}}
\caption{{Example of a 3-plex and its inverse graph.}} 
\label{fig:3-plex}
\vspace{-3mm}
\end{figure}

\smallskip
\noindent\textbf{Example.}
Consider the example of a 2-plex and its inverse graph in Figure~\ref{fig:2-plex}. Based on the definition of $F$, $L$ and $R$ as discussed above, one possible partition is $F=\{v_1, v_2\}$, $L=\{v_3,v_4\}$ and $R=\{v_5,v_6\}$. Given $|L|=2$,  we can directly output 4 maximal cliques as follows, $\{v_1, v_2, v_3, v_4\}$, $\{v_1,v_2,v_3, v_6\}$, $\{v_1, v_2, v_4, v_5\}$ and $\{v_1, v_2, v_5, v_6\}$. 

Consider the example of a 3-plex and its inverse graph in Figure~\ref{fig:3-plex}. Based on the definitions, the inverse graph contains a path $p=\{v_1, v_2, v_3\}$ and a cycle $c=\{v_4, v_5, v_6\}$. Consider the path $p$. It has two partial maximal cliques inside, i.e., $\{v_1,v_3\}$ and $\{v_2\}$. Consider the cycle $c$. It has three partial cliques inside, i.e., $\{v_4\}$, $\{v_5\}$ and $\{v_6\}$. Therefore, there are 6 maximal cliques in $g_C$, i.e., $\{v_1, v_3, v_4\}$, $\{v_1, v_3,v_5\}$, $\{v_1, v_3,v_5\}$, $\{v_2, v_4\}$, $\{v_2,v_5\}$ and $\{v_2, v_6\}$.

\section{Experimental Setup and Results on Synthetic Datasets}
\label{app:syn}

\smallskip\noindent\textbf{Setup.} 
The synthetic datasets are generated based on two random graph generators, namely the Erd\"{o}s-R\'{e}yni (ER) graph model \cite{erdds1959random} and the Barab\'{a}si–Albert (BA) graph model \cite{albert2002statistical}. 
Specifically, in the ER generator, the model first generates $n$ vertices and then randomly chooses $m$ edges between pairs of vertices. In the BA generator, the model adds the vertices to the network one at a time, and each new vertex is connected to a set number of existing vertices based on preferential attachment, i.e., vertices with a higher degree are more likely to receive new links.
By default, the number of vertices and edge density are set as  $n=1,000,000$ and $\rho=20$ (i.e., $m=n\rho$) for synthetic datasets, respectively. Due to the random nature of the graph generators, given a setting, we generate 10 graphs and report the mean of the results over these graphs.

\begin{figure}[th]
\centering
\subfigure[Scalability test (ER model).]{
\label{fig:scale-er}
\includegraphics[width=0.23\textwidth]{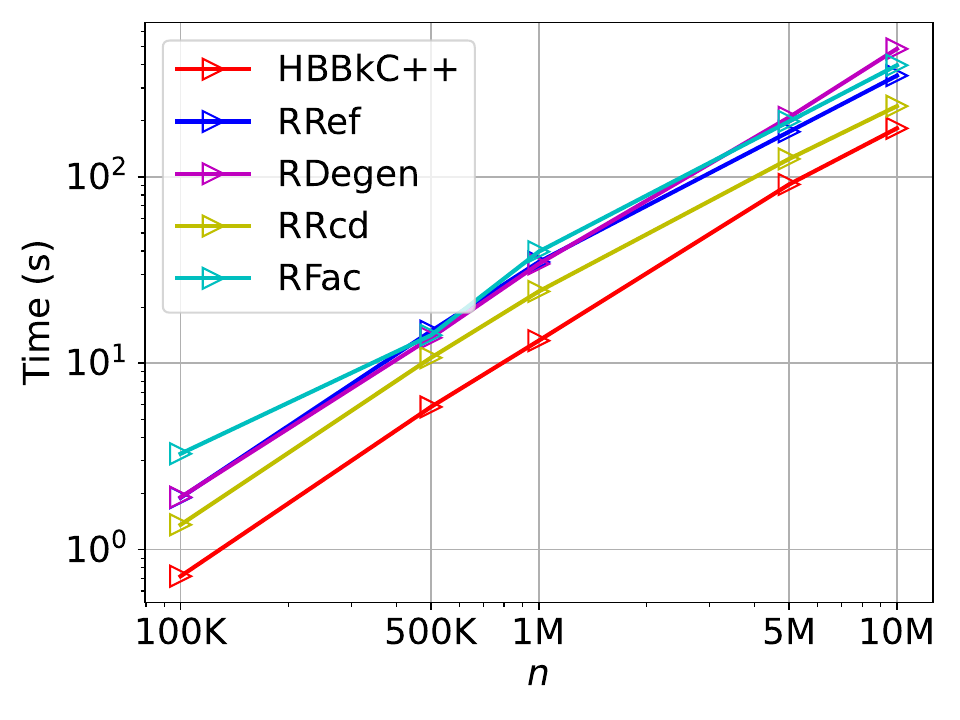}}
\subfigure[Scalability test (BA model).]{
\label{fig:scale-ba}
\includegraphics[width=0.23\textwidth]{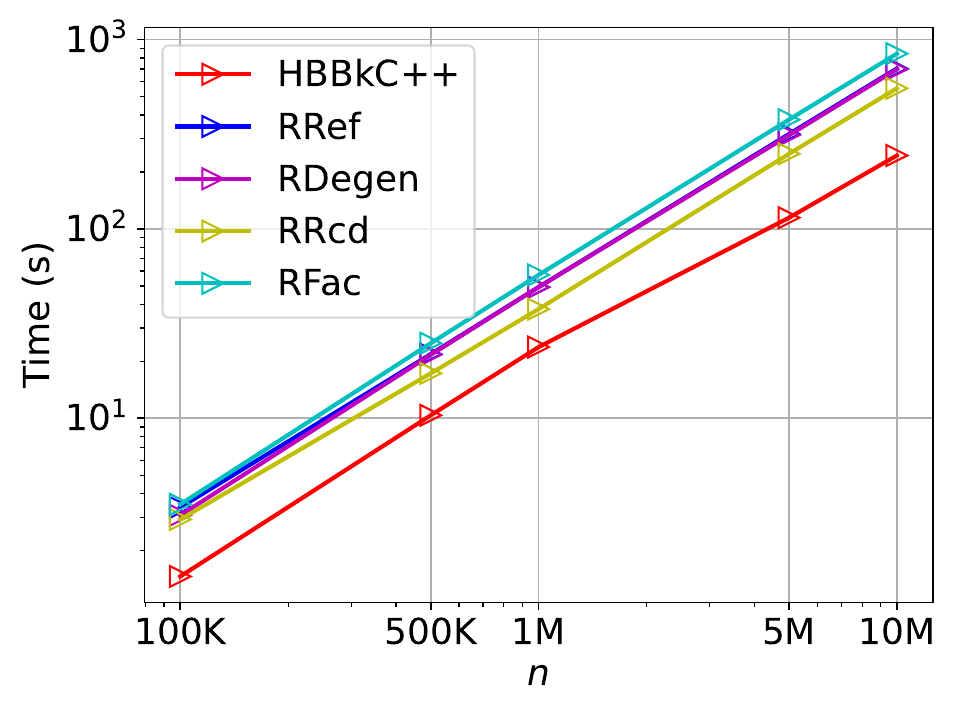}}
\subfigure[Results on varying $\rho$ (ER model).]{
\label{fig:density-er}
\includegraphics[width=0.23\textwidth]{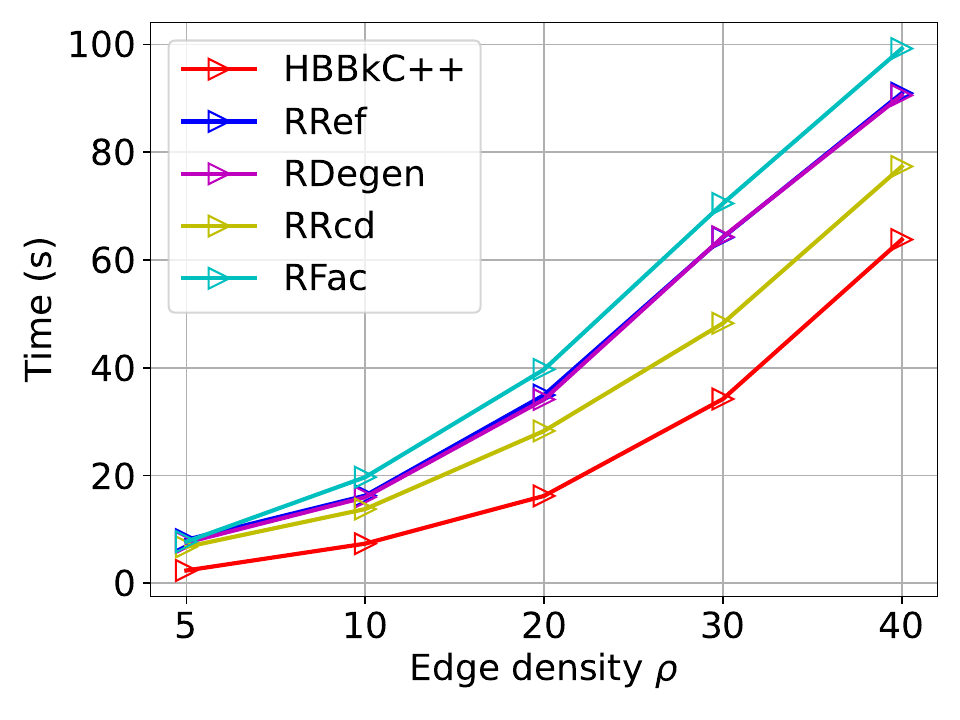}}
\subfigure[Results on varying $\rho$ (BA model).]{
\label{fig:density-ba}
\includegraphics[width=0.23\textwidth]{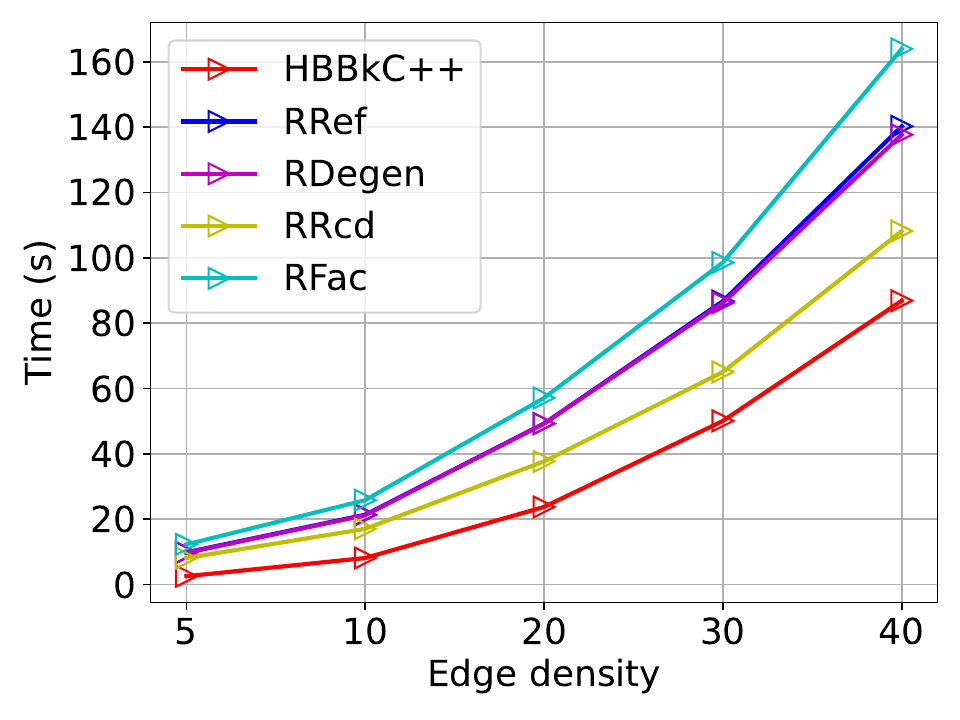}}
\caption{ Results on synthetic datasets. }
\end{figure}

\smallskip\noindent\textbf{Results.}
First, we vary the number of vertices in the range $n\in\{100\text{K}, 500\text{K}, 1\text{M}, 5\text{M}, 10\text{M}\}$ and set $\rho=20$ {\cheng by} default. The results for the graphs generated using the ER model and the BA model are presented in Figure~\ref{fig:scale-er} and Figure~\ref{fig:scale-ba}, respectively. Under all settings, \texttt{HBBMC++} consistently outperforms all baselines, achieving remarkable speedups compared to the second-best algorithm \texttt{RRcd}, with average speedups of 1.84x and 1.94x (up to 2.84x and 2.26x) on the graphs generated by the ER and BA models, respectively.
Additionally, we report the degeneracy number $\delta$ and the truss-related number $\tau$ under different settings. In the graphs generated by the ER model, these values are $\delta=\{8, 14, 28, 43, 58\}$ and $\tau=\{0, 1, 3, 5, 7\}$ for different values of $n$. In contrast, for the graphs generated by the BA model, the values remain approximately $\delta\approx 42$ and $\tau\approx 25$ across all $n$. 
This interesting phenomenon can be explained by the distinct mechanisms of edge formation in the two models. In the ER model, edges are uniformly distributed, which facilitates the formation of larger  $k$-cores as the number of vertices $n$ increases, leading to an increase in degeneracy. Conversely, in the BA model, the preferential attachment mechanism concentrates edges around a few high-degree hubs, stabilizing the graph’s core structure and maintaining a nearly constant degeneracy, irrespective of the growth in $n$.
Finally, we note that all above results are well aligned with our theoretical analysis, i.e., the worst-case time complexity of \texttt{HBBMC++} is better than that of existing algorithms when $\delta\ge \tau +\frac{3}{\ln 3}\ln \rho$.

Second, we vary the edge density in the range $\rho\in\{5,10,20,30,40\}$ and set $n=1\text{M}$ {\cheng by} default. The results for the graphs generated using the ER and BA models are presented in Figure~\ref{fig:density-er} and Figure~\ref{fig:density-ba}, respectively. 
On average, \texttt{HBBMC++} achieves speedups of 1.65x and 1.88x (with maximum speedups of 1.89x and 3.17x) on the graphs generated by the ER and BA models, respectively, compared to the second-best algorithm \texttt{RRcd}. We also observe that the speedup decreases as the graph $\rho$ increases. The possible reason is that when $\rho$ increases, the graph becomes denser. Therefore, the differences between $\delta$ and $\tau$ becomes {\cheng narrower}.

Notably, we observe that when comparing the results under identical parameter settings but using the ER model, the computation time is longer for graphs generated by the BA model. The possible reason is that the graphs following a power-law distribution typically contain larger sizes of maximal cliques. For example, under the default setting (i.e., $n=1\text{M}$ and $\rho=20$), the largest clique in the ER graph is a 5-clique, while for the BA graph, the largest one is a 28-clique.

\balance
\bibliographystyle{IEEEtran}
\bibliography{reference}

\end{document}